\providecommand{\U}[1]{\protect\rule{.1in}{.1in}}
\newtheorem{thm}{Theorem}\Crefname{thm}{Theorem}{Theorems}
\Crefname{lem}{Lemma}{Lemmas}
\newtheorem{prp}[thm]{Proposition}\Crefname{prp}{Proposition}{Propositions}
\Crefname{cor}{Corollary}{Corollaries}
\Crefname{prb}{Problem}{Problems}
\newtheorem{dfn}[thm]{Definition}\Crefname{dfn}{Definition}{Definitions}
\Crefname{rmk}{Remark}{Remarks}
\Crefname{section}{Section}{Sections}
\Crefname{appendix}{Appendix}{Appendices}
\numberwithin{equation}{section}
\let\oldref\ref
\renewcommand{\ref}[1]{(\oldref{#1})}
\DeclareMathOperator{\tr}{tr}
\DeclareMathOperator{\spec}{spec}
\newcommand{\haar}{\operatorname{Haar}}
\newcommand{\R}{\mathbb{R}}
\newcommand{\C}{\mathbb{C}}
\newcommand{\state}[1]{{\ket{#1}\bra{#1}}}
\newcommand{\norm}[1]{{\left\Vert{#1}\right\Vert}}
\newcommand{\E}{\mathbb{E}}
\newcommand{\cN}{\mathcal{N}}
\newcommand{\cC}{\mathcal{C}}
\newcommand{\cD}{\mathcal{D}}
\newcommand{\cL}{\mathcal{L}}
\title{Linear Cross Entropy Benchmarking with Clifford Circuits}
\author[1]{Jianxin Chen}
\author[2]{Dawei Ding}
\author[1]{Cupjin Huang}
\author[3]{Linghang Kong}
\affil[1]{Alibaba Quantum Laboratory, Alibaba Group USA, Bellevue, Washington 98004, USA}
\affil[2]{Alibaba Quantum Laboratory, Alibaba Group USA, Sunnyvale, California 94085, USA}
\affil[3]{Alibaba Quantum Laboratory, Alibaba Group, Hangzhou, Zhejiang 311121, P.R. China}
\date{}
\begin{document}

\maketitle
\begin{abstract}
With the advent of quantum processors exceeding $100$ qubits and the high engineering complexities involved, there is a need for holistically benchmarking the processor to have quality assurance. Linear cross-entropy benchmarking (XEB) has been used extensively for systems with $50$ or more qubits but is fundamentally limited in scale due to the exponentially large computational resources required for classical simulation. In this work we propose conducting linear XEB with Clifford circuits, a scheme we call \emph{Clifford XEB}. Since Clifford circuits can be simulated in polynomial time, Clifford XEB can be scaled to much larger systems. To validate this claim, we run numerical simulations for particular classes of Clifford circuits with noise and observe exponential decays. When noise levels are low, the decay rates are well-correlated with the noise of each cycle assuming a digital error model. We perform simulations of systems up to 1,225 qubits, where the classical processing task can be easily dealt with by a workstation. 
Furthermore, using the theoretical guarantees in~Chen {\em et al.} (arXiv:2203.12703), we prove that Clifford XEB with our proposed Clifford circuits must yield exponential decays under a general error model for sufficiently low errors. Our theoretical results explain some of the phenomena observed in the simulations and shed light on the behavior of general linear XEB experiments.
\end{abstract}

\tableofcontents
%\newpage

\section{Introduction}
Quantum computers are no longer theoretical constructs or even small-scale, proof-of-concept devices. There are already multiple processors with over 50 qubits~\cite{arute2019quantum,wu2021strong}, and recently a processor with more than 100 qubits has been built~\cite{collins_easterly_2021}. With this number of qubits comes a litany of engineering and operating challenges, including the overwhelming number of control lines and pulse generators required to perform gates and measurements. There are also challenges in processor design, including significant crosstalk between qubits and poor two-qubit gate connectivity. Given all these challenges, there is a demand for benchmarking these processors to have assurance that the algorithms we run will give accurate results. In addition, the figure of merit provided by such a benchmarking scheme can serve as an essential guide for hardware iteration to build ever higher quality processors. Although common methods such as single- and two-qubit randomized benchmarking (RB) can provide useful information, these methods are not sensitive to aggregate effects such as the difference between calibrating qubits on a processor individually or simultaneously~\cite{mckay2019three}. To capture such system-level effects, holistic benchmarks which should involve all or a large number of qubits on the processor are needed. Such benchmarks can also provide information about whether there are correlated errors in the system by comparing the results to the digital error model~\cite{arute2019quantum}. This is crucial for realizing fault-tolerant quantum computation.

To holistically benchmark modern quantum processors with tens of qubits, various schemes have been developed, most notably linear cross-entropy benchmarking (linear XEB). Linear XEB was originally proposed for the ``quantum supremacy'' experiment~\cite{arute2019quantum}, where it was used to characterize increasingly larger quantum circuits so as to extrapolate the error of the 20-cycle Sycamore circuit. To implement linear XEB, we run quantum circuits with random layers of gates and perform a computational basis measurement. We next classically compute the probability of the bit string we measure. We then repeat, taking an overall average. This average, up to constants, is the linear XEB measure. It has been experimentally and numerically observed that this measure exponentially decays with the number of cycles for a noisy circuit, and this decay exponent is proposed as a measure of gate quality~\cite{arute2019quantum,wu2021strong,gao2021limitations}. Although originally conceived to support the ``quantum supremacy'' claim, linear XEB has become a benchmarking scheme in its own right~\cite{arute2019quantum,wu2021strong,mi2021information,liu2021benchmarking}. Linear XEB has the advantage of requiring only a shallow circuit, which is easy to implement on current processors. However, if we want to benchmark larger devices, linear XEB has an obvious Achilles' heel. Namely, the random gate set is chosen such that the classical processing needed for linear XEB, that is, simulating the resulting random quantum circuits, scales poorly with the number of qubits. Indeed, by the quantum supremacy claim itself, linear XEB \emph{by design} cannot benchmark a processor with more than a few tens of qubits~\cite{arute2019quantum}.\footnote{Readers who are familiar with the ``quantum supremacy'' paper may point out that the paper's method of using patch circuits or elided circuits can push the qubit number further. However, using patch circuits is effectively benchmarking two separate processors. As for elided circuits, the complexity of classical simulation still scales poorly. } Therefore, if we want to keep apace with current hardware development, we need an alternative benchmarking scheme that can scale to larger qubit numbers. 

In this work, we propose a scalable benchmarking scheme which replaces the random circuits in  linear XEB with Clifford circuits. We call our scheme \emph{Clifford XEB}. This is a crucial modification since Clifford circuits are easy to classically simulate~\cite{gottesman1997stabilizer}. Moreover, Clifford XEB retains the convenience of requiring only shallow circuits. We believe that given the prevalence of Clifford RB~\cite{emerson2005scalable,levi2007efficient,helsen2020general} and linear XEB, it is a negligible overhead to adopt Clifford XEB as a new benchmarking scheme. To demonstrate the scalability of our scheme, we perform numerical simulations of certain random Clifford circuits on various topologies with depolarizing noise. We simulate circuits with up to 225 qubits. For noise levels comparable to what is possible in current hardware, we can see clear exponential decays whose rates are consistent with the digital error model. We also conduct simulations on the 2-D grid topology with 1,225 qubits to further showcase the scalability of Clifford XEB. On top of the promising numerical results, we also prove that Clifford XEB for the classes of circuits we consider yields an exponential decay under a general error model for sufficiently low error. This is necessary since we a priori do not know if we would even measure an exponential decay for every experimental setting we encounter (although this can be shown for special error models~\cite{liu2021benchmarking,gao2021limitations}). In addition to providing this assurance, our proof also explains some of the phenomena we observe in the simulations, such as the linear XEB measure rapidly decaying for low cycle numbers before entering a smooth exponential decay. Through our theoretical results we provide concrete guidelines for how experiments, both Clifford XEB \emph{and} conventional linear XEB, should be interpreted so as to correctly extract information about gate errors.

\section{The Clifford XEB Scheme}
\label{sec:numerics}
In this section we give an explicit description of the Clifford XEB scheme and perform numerical simulations showing its viability for benchmarking quantum processors with more than a thousand qubits.

\subsection{Description of the Scheme}

Consider a quantum computer with $n$ qubits. Let $S \subset SU(2^n)$ be a subset of the special unitary group and let $\mu$ be a probability distribution on $S$. The implementation map $\phi:S\to \cC(2^n)$, where $\cC(2^n)$ is the set of quantum channels on $n$ qubits, defines the noisy implementation of gates on the system. This in particular means that the noise for each Clifford element is independent of the previous elements, that is, the noise is Markovian. Assume the initial state is $\rho_0$ and final measurement POVM is $M=\{M_x\}_{x\in\{0,1\}^n}$, which are noisy realizations of some desired initial state and measurement. Then, a linear XEB experiment is defined as follows. 

\begin{dfn}[Linear XEB]
A linear XEB scheme with parameters $(S,\mu, \phi, M, \rho_0)$ is given by the following procedure:
\begin{enumerate}
    \item For a given number of cycles $m$, sample $m$ i.i.d. quantum gates $g_1,\ldots,g_m$ from $S$ according to the distribution $\mu$.
    \item Initialize the system to $\rho_0$ and apply $\phi(g_1),\phi(g_2),\ldots,\phi(g_m)$ to the system.
    \item Measure the system under a POVM $\{M_x\}_{x\in\{0,1\}^n}$ and get a binary string $x$. Calculate the probability of getting $x$ for the ideal circuit: $p_x(m)=|\braket{x|g_m\ldots g_2 g_1|0^n}|^2$.
    \item Repeat steps 2 to 3 multiple times for this sampled circuit and calculate the average of $p_x(m)$ over sampled strings $x$.
    \item Repeat steps 1 to 4 multiple times and let $\hat p(m)$ be the average of $\mathbb{E}_x[p_x(m)]$ over different circuits. Let $\hat q(m) = - 1+ 2^n \hat p(m) $.
    \item Repeat steps 1 to 5 for number of cycles $m_1,\ldots, m_k$. The returned results are $\hat q(m_1), \ldots, \hat q(m_k)$.
\end{enumerate}
\end{dfn}
\noindent We can see that $\hat q(m)$ is an unbiased estimator of the following quantity
\begin{equation}
    q_R(m) = -1+2^n \mathbb{E}_{g_1, \cdots, g_m \sim \mu} \Big[ \sum_{x \in \{0,1\}^m}|\braket{x|g_m\ldots g_2 g_1|0^n}|^2 \tilde p(x)\Big], \label{eq:xeb-def}
\end{equation}
where $\tilde p(x)$ is the probability of measuring the binary string $x$ in the noisy circuit:
\begin{equation}
    \tilde p(x) = \tr[M_x \cdot \phi(g_m) \circ \ldots \circ \phi(g_2) \circ \phi(g_1) (\rho_0)]. 
    \label{eq:noisy_prob}
\end{equation}
We next define Clifford XEB, a special case, illustrated in~\Cref{fig:circuit}.

\begin{dfn}[Clifford XEB]
A Clifford XEB scheme with parameters $(S,\mu,\phi,M,\rho_0)$ is a linear XEB scheme, where $S\subseteq \mathrm{Cl}(2^n)$ is a subset of the Clifford group.
\end{dfn}
\begin{figure}[h]
    \centering
    \includegraphics[width= 0.9\textwidth]{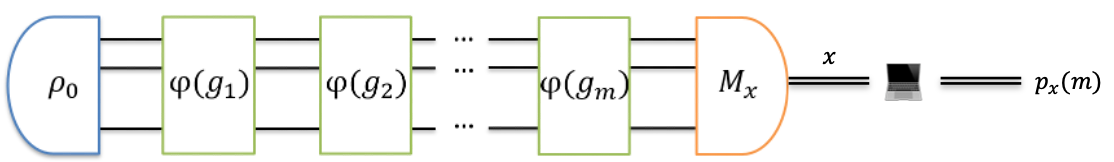}
    \caption{Illustration of one iteration of a Clifford XEB experiment. We start with an initial state, apply $m$ cycles of noisy implementations of Clifford elements, and finally perform a measurement. The measured bit string's ideal probability is then classically computed. This value is averaged over multiple measurements and random circuits. }
    \label{fig:circuit}
\end{figure}

That is, this Clifford XEB only differs from conventional linear XEB in that we require $S$ to be Clifford. Again, we do this since the ideal distributions of Clifford circuits can be efficiently computed classically~\cite{gottesman1997stabilizer}. Also note that our definition of cycles is slightly different from that of~\cite{arute2019quantum}. There they allow heterogeneous circuit geometries for every cycle, while we require that the gates in each cycle is sampled from the same distribution.

\subsection{Numerical Simulations}
\label{subsec:simulations}
We numerically study the behavior of Clifford XEB for various connectivity topologies to demonstrate its wide applicability. Our findings also help explain the behavior of general linear XEB schemes. The topologies we consider include a 1-D chain, a 2-D grid, and a star topology, i.e.\ there is a center qubit that is connected to all the other qubits, to implement the construction in~\cite{dankert2009exact}, which we will refer to as the \emph{Clifford approximate twirl}. We describe the particular distribution over Cliffords for each topology, illustrated in~\Cref{fig:topologies}.
\begin{figure}
\centering
    \begin{subfigure}[h]{0.31\textwidth}
        \includegraphics[width=\textwidth]{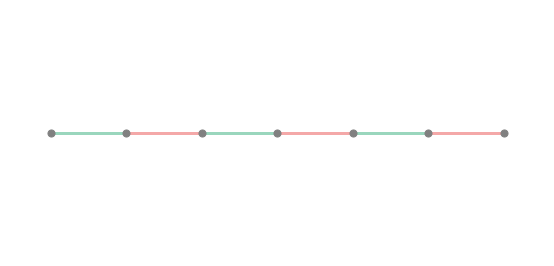}
        \caption{1-D chain.}
    \end{subfigure}
    ~
    \begin{subfigure}[h]{0.31\textwidth}
        \includegraphics[width=\textwidth]{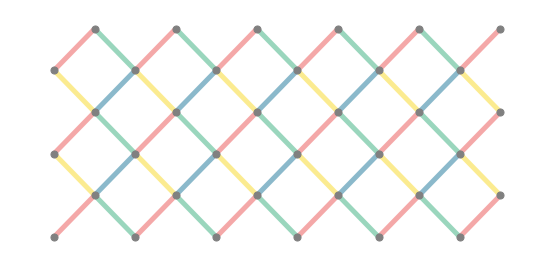}
        \caption{2-D grid.}
    \end{subfigure}
    ~
    \begin{subfigure}[h]{0.31\textwidth}
        \includegraphics[width=\textwidth]{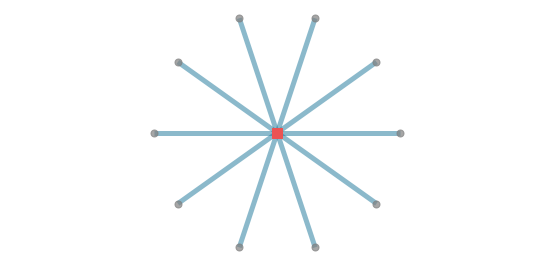}
        \caption{Star topology.}
    \end{subfigure}
    \caption{Illustration of different quantum device topologies we consider in our paper. In (a) and (b), qubit connections with different colors indicate different layers of parallel two-qubit gates to be applied. The Clifford approximate twirl can be natrually applied on a star topology depicted in (c).}
    \label{fig:topologies}
\end{figure}
\begin{enumerate}
    \item For the 1-D chain, each cycle consists of two layers of two-qubit gates each preceded by a layer of single qubit gates. The single-qubit gate layer involves i.i.d.\ random single-qubit Clifford gates on each qubit. The two-qubit layers are CNOT gates: the first connecting qubits $(0,1), (2,3), \cdots, (2k, 2k+1),\cdots$ and the second connecting $(1,2), (3,4),\cdots, (2k+1,2k+2), \cdots$.
    \item For the 2-D grid, we take motivation from the random circuits on the Sycamore processor in~\cite{arute2019quantum}. Each cycle consists of a single-qubit gate layer consisting of i.i.d.\ single-qubit Cliffords on every qubit, followed by a two-qubit gate layer consisting of CNOT gates. The structure of the CNOT's is chosen from the configurations A, B, C and D in~\cite[Figure 3]{arute2019quantum} with equal probability.\footnote{In the Sycamore random circuit sampling experiments, the two-qubit gates in different cycles follow a fixed pattern ABCDCDAB. We do not do this because we require every cycle to have the same distribution. }
    \item We implement the Clifford approximate twirl~\cite{dankert2009exact} on a star topology. Each cycle in this construction consists of two repeated circuits, each circuit involving three layers of two-qubit gates, interleaved by four layers of single-qubit gates. See~\Cref{app:dankert} for the detailed construction.
\end{enumerate}

The numerical experiments are run using the stabilizer-based Clifford circuit simulator Stim~\cite{gidney2021stim} on a 40-core Intel Xeon Platinum machine with 96GB memory. For each topology, we run experiments with different qubit numbers and noise levels. We pick qubit numbers 25, 100 and 225, which are respectively $5\times 5$, $10\times 10$, and $15\times 15$ on the 2-D grid. As for noise levels, we assume depolarizing noise for single- and two-qubit gates of $(10^{-5},10^{-4})$, $(10^{-4},10^{-3})$, and $(10^{-3},10^{-2})$ respectively. This is similar to what is achievable in current hardware. We also simulate the ideal case for reference. Each experiment is run for 2 to 50 cycles, where for each cycle number 3,000 random circuits are generated, each sampled 100,000 times. The results are plotted in~\Cref{fig:plot}. All of the experiments added together only take a total of several hours on the aforementioned workstation. In comparison, conventional linear XEB is already infeasible for just tens of qubits.

\begin{figure}
    \centering
    \begin{subfigure}[h]{\textwidth}
        \centering
        \includegraphics[width=\textwidth]{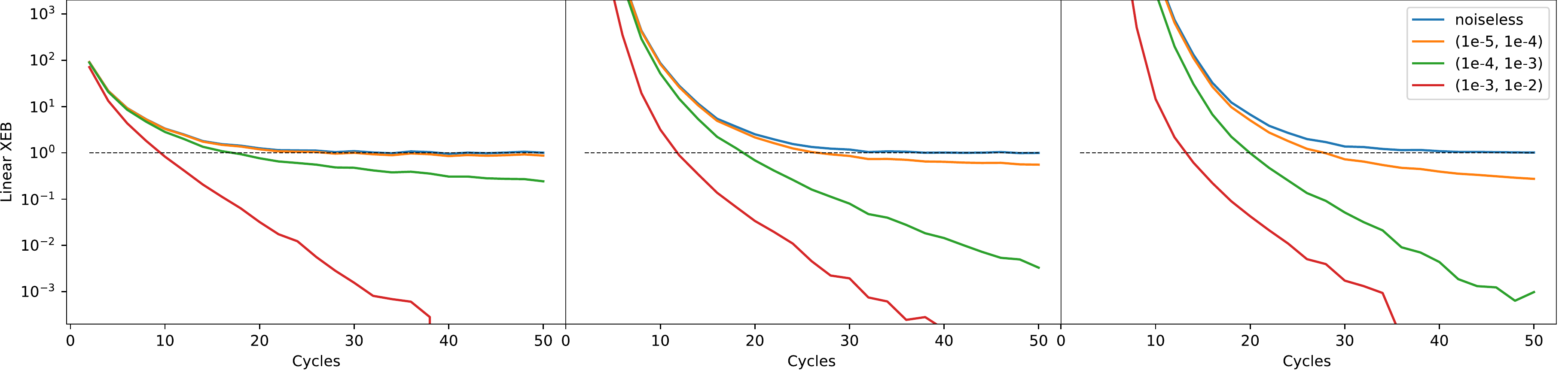}
        \caption{1-D chain}
        \label{fig:1d_chain}
    \end{subfigure}
    \begin{subfigure}[h]{\textwidth}
        \centering
        \includegraphics[width=\textwidth]{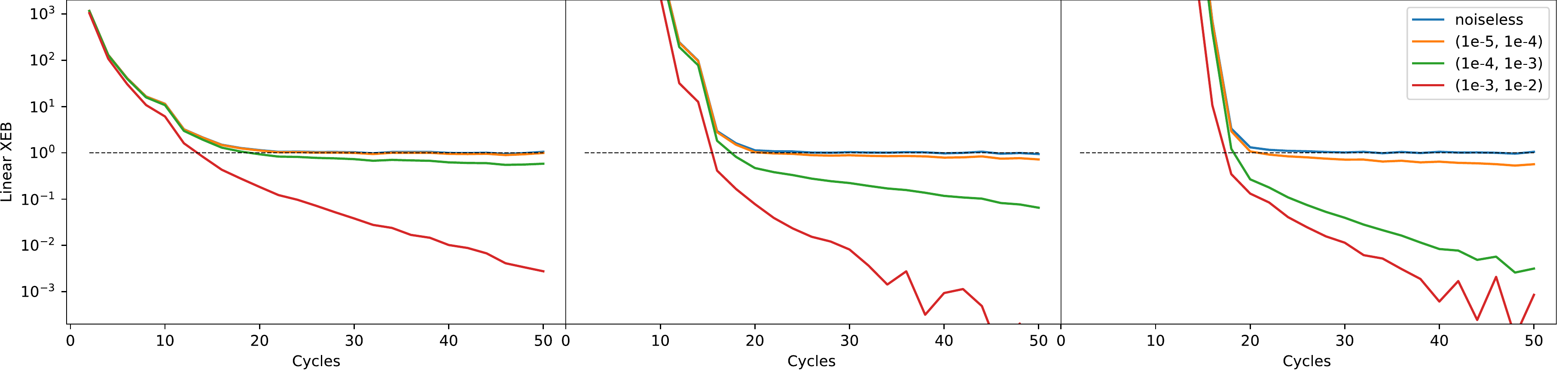}
        \caption{2-D grid}
        \label{fig:2d_grid}
    \end{subfigure}    
    \begin{subfigure}[h]{\textwidth}
        \centering
        \includegraphics[width=\textwidth]{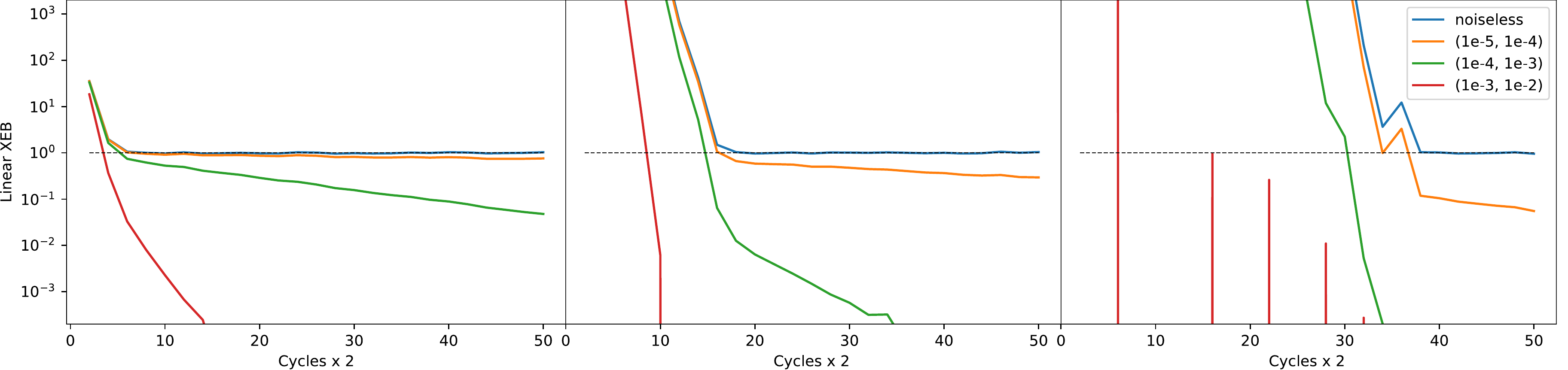}
        \caption{Clifford approximate twirl (star topology)\footnote{The $x$-axis is twice the cycle number since we need to apply the circuit construction twice to get a $\gamma$-approximate twirl. See~\Cref{sec:theory} for details. } }
        \label{fig:dankert}
    \end{subfigure}
    \caption{Linear XEB as a function of the number of cycles on different topologies, qubit numbers, and noise levels. Each row corresponds to the labeled topology. The number of qubits simulated on each topology is, from left to right, 25, 100, and 225. The horizontal line is linear XEB $=1$ for reference. }
    \label{fig:plot}
\end{figure}

\begin{figure}
    \centering
    \includegraphics[width=\textwidth]{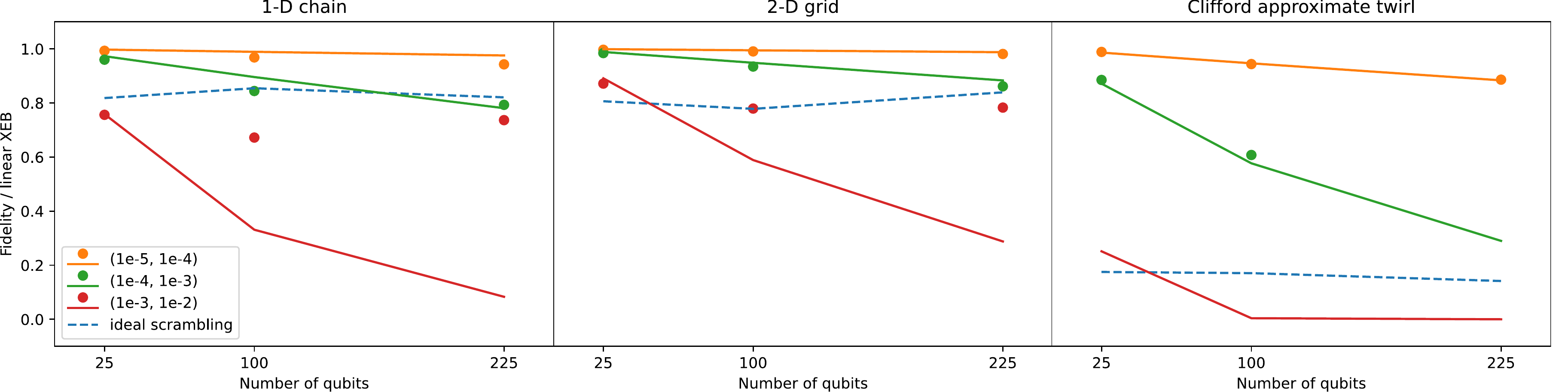}
    \caption{Comparison between the linear XEB decay rate, the digital error model, and the mixing rate, under different topologies. The blue dashed line depicts the decay rate of the ideal linear XEB towards 1. For each error rate, the solid lines are predictions given by the digital error model, and the dots are extracted from the numerical experiments. Some data is omitted due to failure of fitting.}
    \label{fig:error}
\end{figure}

From~\Cref{fig:plot} we can make the following observations.

\paragraph{Convergence and two-phase behavior}
We see a universal behavior of linear XEB with increasing cycle number: the ideal value always converges to 1,\footnote{Technically, the ideal linear XEB value converges to $\frac{D-1}{D+1}$ as shown in \Cref{app:var}, $D$ being the dimension of the quantum system. In the rest of the paper we take the approximation $\frac{D-1}{D+1}\approx 1$ unless explicitly stated.} while the noisy value converges to 0. This is the same as the experimental findings in~\cite{arute2019quantum}. Intuitively, for sufficiently many cycles, the random Clifford circuit will converge to a uniform distribution over the Clifford group~\cite{harrow2009random}, while a noisy implementation will eventually lose all information of the input. This explains the final convergence values. A more rigorous proof is given in~\Cref{app:var} and~\Cref{app:sed}.

The curves also exhibit a two-phase behavior: In the first phase, which we call the \emph{scrambling phase} in reference to the connection between random circuits and quantum scrambling~\cite{hayden2007black,sekino2008fast,lashkari2013towards,brown2012scrambling,shenker2014black, harrow2009random, harrow2018approximate, hunter2019unitary}, both the ideal and noisy linear XEB values decay quickly from being exponentially large to a constant value. The second phase, which we call the \emph{decaying phase}, the ideal linear XEB value converges to 1, while the noisy values undergo single exponential decays (which would be a linear decay on the semilog plot). Intuitively, only the data in the second phase is of relevance to benchmarking, as the first phase is characterized by the \emph{scrambling} of quantum information, rather than the \emph{loss} of it. We will give theoretical support for this intuition in~\Cref{sec:theory}.

One way of interpreting the decay rate is to use it as a proxy for the fidelity of a cycle, which in turn can be estimated using the digital error model \cite{arute2019quantum} by multiplying the fidelity of each gate in the circuit. To see if this applies to the numerical experiments, we extract the decay rates by linearly fitting the curve in the second phase (we manually choose the range of cycles). We then compare them with the digital error model in~\Cref{fig:error}. We see that the digital error model succeeds to predict the linear XEB decay rate when the error rate is small, but fails to do so when the error rate is too large. This is because even in the decaying phase, the ideal linear XEB term contributes an error term on top of the exponential decay from gate errors. As a result, when the gate error exceeds a certain threshold, the decay of the linear XEB value no longer represents the fidelity but is dominated by the slower mixing of the ideal term.

To verify this intuition, for each circuit configuration (topology and number of qubits) we sampled 300,000 random circuits to better evaluate the mixing rate of the ideal linear XEB value, which we call the \emph{mixing rate}. We do this because of the quick decay of the noiseless value and the high variance of Clifford XEB (See~\Cref{app:var}). The mixing rate towards $1$ is plotted as dashed lines in \Cref{fig:error}. It can be observed that the digital error model fails to predict the linear XEB decay rate when the fidelity estimated from the digital error model is below the line of mixing of the ideal linear XEB. This indicates that the Clifford XEB, and in general linear XEB, only provides useful information when the noise level is sufficiently small. As a rule of thumb, one can estimate the convergence rate of the ideal linear XEB values prior to running linear XEB experiments to ensure that the extracted decay rate from the experiment reflects the gate fidelities. To characterize large errors, one could either change the twirling distribution $\mu$ for faster scrambling (e.g. the one given in~\Cref{app:dankert} gives a theoretically provable mixing rate) or benchmark a subset of the qubits on the device to lower the error of each cycle.

\paragraph{Fluctuations}
Occasionally (e.g. in the rightmost plot of \Cref{fig:dankert}), there are noticeable bumps in the curves. Moreover we observe that the red line, corresponding to a two-qubit gate error rate of $10^{-2}$, fluctuates heavily when the linear XEB value is small. We attribute such instabilities to three main causes:
\begin{itemize}
    \item During the scrambling phase, both the expectation and the variance of the noiseless and noisy linear XEB value is large, resulting in significant fluctuations for the noiseless and noisy curves. The middle plot in~\Cref{fig:2d_grid} has such a bump at $m=14$.
    \item Unlike non-Clifford scrambling circuits that almost surely results in a measurement distribution obeying Porter-Thomas statistics, the measurement distribution of a Clifford circuit is always uniform on the support. As a result, the noiseless linear XEB value has constant variance even after the scrambling phase. This is different from doing linear XEB with Haar random gates because the Clifford group forms a unitary three-design but not a four-design. We derive the variance in \Cref{app:var}. Since the variance is constant, taking a large but constant number of random circuits smooths out the fluctuation. It can be observed in our plot that taking 3000 random circuits per data points effectively resolves this issue, except for a large deviation observed in the rightmost plot of \Cref{fig:dankert}, at 18 cycles.
    \item The linear XEB values of the noisy curves converge to 0 exponentially with respect to the cycle number. Given a fixed number of samples per random circuit, the absolute error of the linear XEB value is a constant, but the relative error quickly grows as the expectation approaches 0. Similar to~\cite{arute2019quantum}, sufficiently many random samples needs to be taken from each random circuit to ensure an accurate evaluation of the expectation value for linear XEB. This can be observed in the right two figures in \Cref{fig:2d_grid} where the red lines fluctuate at the tails, and in \Cref{fig:dankert} where the red lines are not even observed in our range. 
\end{itemize}

Understanding these sources of instability can help guide Clifford XEB and general linear XEB experiments. The second source only applies to Clifford XEB, but a constant number of random circuits suffices to reduce the variance to a small number. The third source applies for both Clifford XEB and linear XEB, in fact for any RB-based scheme. The number of samples taken must be sufficiently large such that the exponential decay curves can be confidently recovered. However, note that in our simulations we can often already isolate a clean single exponential decay at not too small linear XEB values for which we do not need as many samples.

To accurately extract decay rates, we need to perform Clifford XEB experiments in the decaying phase rather than the scrambling phase. A natural question to ask is at what cycle number a Clifford XEB scheme enters the decaying phase, i.e. the scrambling time. While we give evidence in~\Cref{sec:theory} and~\Cref{app:sed} that the scrambling time could scale linearly with respect to number of qubits, we observe experimentally that this only applies to Clifford approximate twirl (scrambling times are approximately 4, 9, 18 for 25, 100, 225 qubits, respectively). Scrambling happens much faster for the 1-D chain and 2-D grid experiments and seems only weakly dependent on the number of qubits. Although a finer theoretical analysis of the scrambling behavior of the noiseless experiment is needed to fully explain this phenomenon, we note that classical simulation of the noiseless experiment can be used to determine both the scrambling time and the mixing rate in the decaying phase. Such a simulation can be done prior to any physical experiment and serves as a guideline for the depths of the actual circuits to be run.

Finally, we present the results of Clifford XEB on the 2-D grid with $35\times 35=1225$ qubits, the results shown in~\Cref{fig:1225}. The experiments are run with two-qubit error rates ranging from $10^{-5}$ to $10^{-3}$ due to the larger number of qubits. We plot linear XEB for cycle numbers from $20$ to $100$ because of longer scrambling time. We see that, while the $10^{-3}$ error decay curve is dominated by the mixing rate, Clifford XEB is consistent with the digital error model when the error rate is below the mixing rate. This shows that the Clifford XEB scheme can be readily applied to quantum devices with more than 1,000 qubits. On the aforementioned workstation, all the noisy experiments took approximately 12 hours to run, and two days and a half to run the noiseless experiments in order to reliably extract the mixing rate.

\begin{figure}
    \centering
    \includegraphics[width=0.5\textwidth]{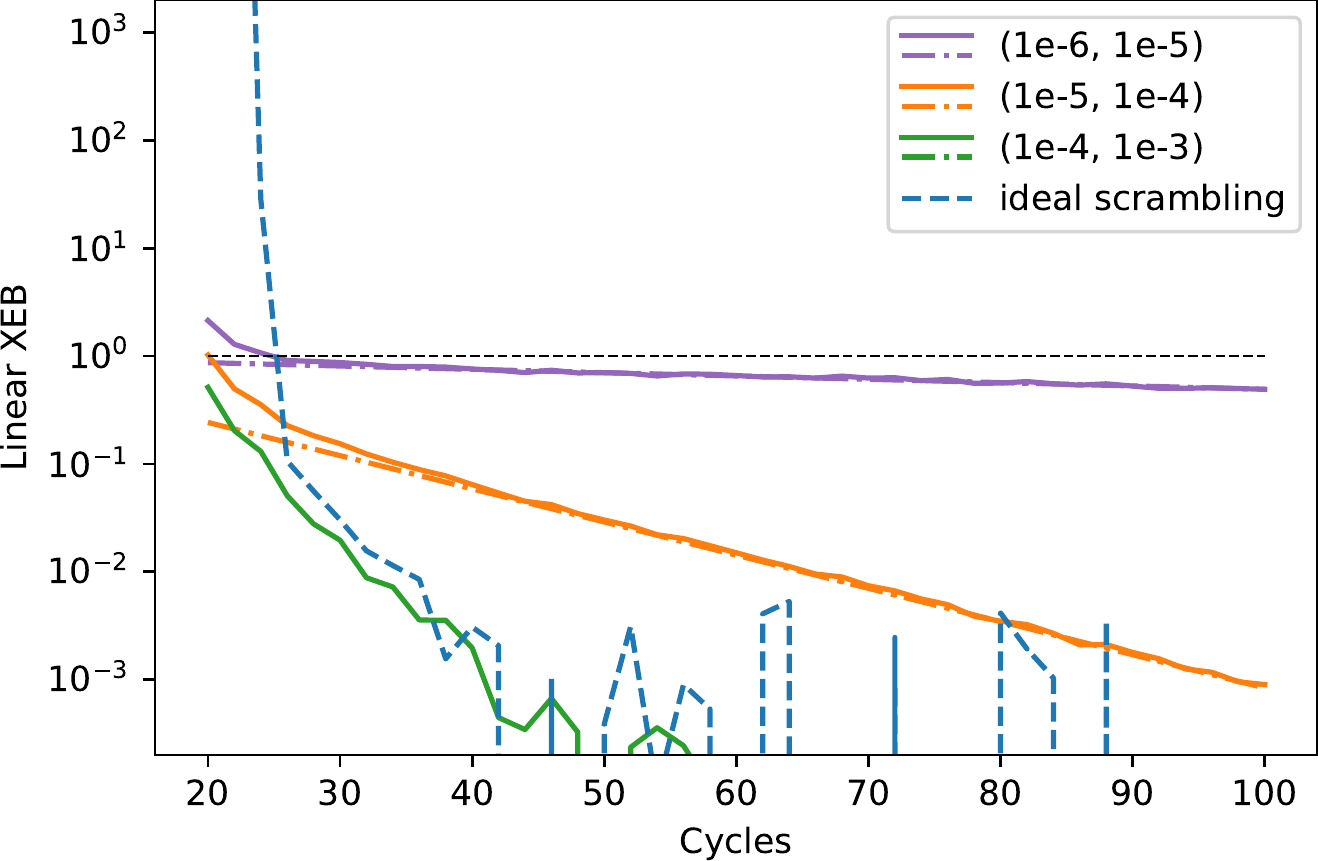}
    \caption{Clifford XEB on a 2-D grid of $35\times35=1225$ qubits. Data points for curves with noise are evaluated over 3000 random circuits with 100,000 samples each. Data points for the noiseless curve is evaluated over 300,000 random circuits. Predictions given by the digital error model is plotted in dashed dotted lines for comparison.}
    \label{fig:1225}
\end{figure}

\section{The Theory of Clifford XEB}
\label{sec:theory}
In this section we mathematically prove that Clifford XEB with the 1-D chain distribution, 2-D grid distribution, or the Clifford approximate twirl yields a single exponential decay under a general error model for sufficiently small errors. For the Clifford approximate twirl, this decay is theoretically guaranteed for a qubit number that scales inversely with the infidelity. Our results also support our explanations of phenomena observed in~\Cref{subsec:simulations}. These technical proofs can be skipped with little effect on the reading of the rest of the paper.

In~\cite{Chen2022}, a generalized framework for RB called universal randomized benchmarking (URB) is proposed, where they prove that for a particular class of schemes known as twirling schemes, an experiment would yield a single exponential decay. Here we show that Clifford XEB with our specific distributions are twirling schemes.

For any $g\in SU(2^n)$, define $\omega(g)$ as the corresponding quantum channel $\omega(g):\rho \to g \rho g^\dagger$. We first define a $\gamma$-approximate twirl, which is one possible definition of an approximate unitary 2-design~\cite{dankert2009exact,Chen2022}:
\begin{dfn}
\label{dfn:twirling}
Let $\mu$ be a measure on the unitary group and $\mathcal C$ be a linear operator on the space of Hermitian matrices.
The twirling map
\begin{equation}
    \Lambda(\mu): \cC \mapsto \int_{g\sim \mu} dg \, \omega(g^\dagger) \circ \cC \circ \omega(g)
\end{equation}
is a $\gamma$-approximate twirl if it satisfies
\begin{equation}
    |||\Lambda(\mu) - \Lambda(\mu_H)|||_\diamond \le \gamma,
\end{equation}
where $\mu_H$ is the Haar measure and $|||\cdot|||_\diamond$ is the induced diamond norm:
\begin{equation}
    |||\Lambda|||_\diamond := \max_{\cC} \frac{\norm{\Lambda(\cC)}_\diamond}{\norm{\cC}_\diamond}.
\end{equation}
Alternatively, $\Lambda(\mu)$ is a $\gamma$-approximate twirl in spectral norm if it satisfies 
\begin{equation}
    |||\Lambda(\mu) - \Lambda(\mu_H)|||_2 \le \gamma,
\end{equation}
where $|||\cdot|||_2$ is the spectral norm of the twirling map treated as a linear operator on superoperators.
\end{dfn}

\subsection{Clifford Approximate Twirl}
\subsubsection{Proof of Exponential Decay}
For Clifford approximate twirls, we use the following theorem in~\cite{Chen2022}:
\begin{thm}[Theorem~8 from~\cite{Chen2022} and \Cref{app:sed}]
\label{thm:urb}
Let $(S,\mu, \phi,M,\rho_0)$ be a linear XEB scheme on $n$ qubits. Suppose that the twirling map corresponding to $\mu$ is a $\gamma$-approximate twirl and that the implementation map satisfies
\begin{equation}
    \E_{g\sim \mu} \norm{\phi(g) - \omega(g)}_\diamond \le \delta.
    \label{eq:delta}
\end{equation}
If $\delta \le \frac{1 - \gamma}{11}$, then there exists $A,B\in \R$ and $p\in [1-2\delta,1]$, such that
\begin{equation}
    |q_R(m) - (A+Bp^m)| \le 16\times  2^n (\gamma+6\delta)^m. \label{eq:urb}
\end{equation}
Further, when $p<1$ we have $A=0$ and $B=O(1)$.
\end{thm}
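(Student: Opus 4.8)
The plan is to recognize the linear XEB signal as a \emph{twirling scheme} in the sense of URB and then invoke Theorem~8 of~\cite{Chen2022} as a black box. The first task is to rewrite $q_R(m)$ so that the $m$ i.i.d.\ cycles collapse into a single transfer superoperator raised to the $m$-th power. Working on the doubled system $\cL(\C^{2^n})\ot\cL(\C^{2^n})$, with the first copy carrying the ideal evolution $\omega(g):\rho\mapsto g\rho g^\dagger$ and the second the noisy channel $\phi(g)$, I would write the ideal probability as $|\braket{x|g_m\cdots g_1|0^n}|^2 = \tr[\state{x}\,\omega(g_m)\circ\cdots\circ\omega(g_1)(\state{0^n})]$ and pair it with $\tilde p(x)=\tr[M_x\,\phi(g_m)\circ\cdots\circ\phi(g_1)(\rho_0)]$. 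Summing over $x$ and introducing the fixed ``measurement'' operator $Q=\sum_x\state{x}\ot M_x$ and initial operator $\sigma_0=\state{0^n}\ot\rho_0$, the argument of the expectation in~\ref{eq:xeb-def} becomes $\tr[Q\,(\omega(g_m)\ot\phi(g_m))\circ\cdots\circ(\omega(g_1)\ot\phi(g_1))(\sigma_0)]$, and averaging over the i.i.d.\ $g_i$ factorizes so that
\begin{equation}
    q_R(m) = -1 + 2^n\,\tr\!\big[Q\,\mathcal{T}^m(\sigma_0)\big], \qquad \mathcal{T}:=\E_{g\sim\mu}\big[\omega(g)\ot\phi(g)\big].
\end{equation}
This is precisely the twirling-scheme form, with $\mathcal{T}$ the transfer operator.

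Second, I would relate $\mathcal{T}$ to its Haar--noiseless counterpart $\mathcal{T}_H:=\E_{g\sim\mu_H}[\omega(g)\ot\omega(g)]$ and read off the relevant spectrum. Splitting $\mathcal{T}-\mathcal{T}_H$ into a twirl-error piece, controlled by the $\gamma$-approximate-twirl hypothesis of~\Cref{dfn:twirling} (noting that $\E_\mu[\omega\ot\omega]-\E_{\mu_H}[\omega\ot\omega]$ is exactly $\Lambda(\mu)-\Lambda(\mu_H)$ under the isomorphism with superoperator conjugation), and an implementation-error piece $\E_\mu[\omega(g)\ot(\phi(g)-\omega(g))]$, controlled by $\E_g\norm{\phi(g)-\omega(g)}_\diamond\le\delta$ from~\ref{eq:delta}, yields a bound of the form $\gamma+(\text{const})\,\delta$ on the perturbation in the appropriate norm. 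Since $\omega(g)$ is unitary in the Hilbert--Schmidt inner product, $\mathcal{T}_H$ is the orthogonal projection onto the invariants of the two-fold adjoint action; by representation theory this commutant is the two-dimensional span of $\id\ot\id$ and the traceless swap invariant, so $\spec(\mathcal{T}_H)=\{1,1,0,\ldots,0\}$ with a hard gap. The dominant subspace supplies the two terms $A$ (the genuine fixed direction) and $Bp^m$ (the decaying direction), while the bulk of $\spec(\mathcal{T})$ is pushed to magnitude at most $\gamma+6\delta$.

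Third, I would apply Theorem~8 of~\cite{Chen2022}. The hypothesis $\delta\le(1-\gamma)/11$ is exactly what guarantees that the two perturbed dominant eigenvalues stay separated from the bulk, so that a resolvent/Bauer--Fike perturbation argument isolates a rank-at-most-two piece of $\mathcal{T}^m$ equal to $A+Bp^m$ on the relevant matrix elements, with remainder $\mathcal{T}^m$ restricted to the bulk. Converting the diamond- and trace-norm estimates on the doubled $2^n$-dimensional system into a scalar bound on $\tr[Q\,\mathcal{T}^m(\sigma_0)]$ produces the $16\times2^n$ prefactor and the $(\gamma+6\delta)^m$ decay in~\ref{eq:urb}. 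The bound $p\ge1-2\delta$ follows since the decaying eigenvalue cannot contract faster than the worst-case single-cycle infidelity $\delta$ permits, and $B=O(1)$ from boundedness of the overlaps of $Q$ and $\sigma_0$ with the dominant eigenprojector. Finally, when $p<1$ the genuine fixed direction collapses: a strictly contracting noisy evolution leaves no invariant trace component besides the one absorbed into the decaying mode, forcing $A=0$.

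I expect the main obstacle to be the first step --- faithfully casting XEB into the URB twirling-scheme framework. Unlike standard RB, where the noise is twirled symmetrically as $\omega(g^\dagger)\circ\cN\circ\omega(g)$, here the ideal and noisy copies are convolved asymmetrically in $\omega(g)\ot\phi(g)$, so some care is needed to verify that the hypothesis on the twirling map $\Lambda(\mu)$ genuinely controls the tensor-product operator $\mathcal{T}$ and that $\mathcal{T}_H$ indeed has the clean rank-two spectrum claimed. Once this identification and the spectral picture of $\mathcal{T}_H$ are in place, the remaining perturbation analysis is the content of Theorem~8 and can be cited; tracking the $2^n$ blow-up through the norm conversions is then only bookkeeping.
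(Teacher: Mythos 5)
Your reduction and citation strategy is essentially the paper's own: the doubled-system transfer operator $\mathcal{T}=\E_{g\sim\mu}[\omega(g)\ot\phi(g)]$ is just the vectorized form of the noisy twirling map $\Lambda_R:\mathcal{C}\mapsto\E_{g\sim\mu}[\omega(g^\dagger)\circ\mathcal{C}\circ\phi(g)]$ used in \Cref{app:sed}, your identification of the rank-two Haar projector is correct, and the perturbation analysis you defer to Theorem~8 of~\cite{Chen2022} is exactly what the paper defers as well. However, the two claims that this theorem adds \emph{on top of}~\cite{Chen2022} --- that $A=0$ and $B=O(1)$ when $p<1$ --- are precisely where your proposal has genuine gaps, and they are the parts that \Cref{app:sed} exists to prove.

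Your mechanism for $A=0$ is wrong. The fixed direction does \emph{not} ``collapse'' when $p<1$: since every $\phi(g)$ is trace preserving and every ideal inverse $\omega(g^\dagger)$ is unital, the completely depolarizing channel $\mathcal{D}$ is an \emph{exact} eigenvalue-$1$ eigenvector of $\Lambda_R$ at any noise level (equivalently, $\mathcal{T}$ is trace preserving on the doubled system, so the eigenvalue $1$ never disappears). What makes $A$ vanish is an overlap computation tied to the XEB normalization, not the absence of a fixed mode: the contribution of this mode to $q_R$ is $2^n\tr[\state{0^n}\tfrac{\id}{2^n}]-1=0$, and if you dropped the $-1$ shift the identical argument would give $A=1$. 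The hypothesis $p<1$ enters only to make $A$ well defined. Similarly, ``$B=O(1)$ from boundedness of the overlaps'' glosses over the actual difficulty, which is an exponential cancellation: the functional carries the explicit prefactor $2^n$ --- the very factor that survives in the error term $16\times 2^n(\gamma+6\delta)^m$ --- so generic norm bounds on overlaps with the dominant eigenprojector only yield $B=O(2^n)$. The paper kills this factor by exploiting that the measurement is a POVM: it inserts a dephasing channel, proves $|\langle\mathcal{D}_M,\mathcal{N}\rangle_{SO}|\le 4^n\norm{\mathcal{N}}_\diamond$ for the quantum-classical superoperator $\mathcal{N}=R^\dag(\mathcal{I}-\mathcal{D})$, and divides by the normalization $\langle\mathcal{I}-\mathcal{D},\mathcal{I}-\mathcal{D}\rangle_{SO}=8^n-2^n$, which together with the similarity bounds $|||L|||_\diamond,|||R|||_\diamond\le 4$ from~\cite{Chen2022} gives $|B|\le 128$. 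Some argument of this kind, using the POVM structure of $Q=\sum_x\state{x}\ot M_x$ rather than mere boundedness, is unavoidable, and your proposal does not contain one.
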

\noindent Note that we give here a general result for linear XEB, that is, $S$ need not be Clifford. The main difference from the statement in~\cite{Chen2022} is the factor of $2^n$ in~\Cref{eq:urb}. This arises from the factor of $2^n$ in the definition of the XEB quantity in~\Cref{eq:xeb-def} which is used for normalization~\cite{arute2019quantum}. However, for large $n$ this is problematic since this would make the error term on the RHS in~\Cref{eq:urb} unacceptably large, considering that $B=O(1)$. To make the error term smaller than the exponential decay $A+Bp^m$ we want to extract, we need the number of cycles $m$ to scale linearly with $n$: $m=\Omega(n)$.\footnote{Note that~\Cref{eq:urb} holds for arbitrarily large $m$. Intuitively, the signal does not disappear at such large $m$ because we assume the noise between each cycle is Markovian, although error correlations are allowed to exist within a cycle.}  Indeed, we do observe that the scrambling time (end of the scrambling phase) is approximately a linear function of the number of qubits in~\Cref{fig:dankert}.

Now, we obtained the Clifford approximate twirl construction from~\cite{dankert2009exact}, where they proved that it constitutes a $\gamma$-approximate twirl. Hence, we can directly apply~\Cref{thm:urb} to Clifford XEB with the Clifford approximate twirl to theoretically guarantee an exponential decay. This guarantee holds under a general error model (we only assume Markovianity), giving much more confidence to what we can expect from running benchmarking experiments. A priori, for schemes where a particular error model is assumed or where there is no theoretical guarantee whatsoever, we might run an experiment and not even see an exponential decay, rendering the scheme completely inapplicable for benchmarking the device in question.

The Clifford approximate twirl is a theoretical construction with excellent circuit size and depth scaling, involving a probabilistic circuit of size $O(n\log 1/\gamma)$ and depth $O(\log n\log 1/\gamma)$. We prove in~\Cref{app:lb} that this is actually optimal in $n$ by showing any $\gamma$-approximate twirl requires a circuit of size $\Omega((1-\gamma) n)$ and depth $\Omega((1-\gamma) \log n)$. However, to achieve this scaling, we would need a complete graph topology. Fortunately, we can easily adapt it to other topologies that are more relevant to hardware at the cost of sub-optimal scaling. For a detailed analysis, see~\Cref{subsec:dankert_topologies}. 

\subsubsection{Qubit Scaling with Error}
Assuming a fixed gate error, any benchmarking scheme has a limit on the number of qubits it can be applied to. One reason for this simply follows from the size of the quantum benchmarking circuit required. To obtain a meaningful result from the circuit, the total error of the circuit should be less than unity. 

We consider the error of one cycle of a RB scheme. Now, we are working under a general error model where gate errors within a cycle can be arbitrarily correlated, so we need to consider worst-case additive\footnote{If errors are uncorrelated, we expect a multiplicative accumulation where fidelities of individual gates multiply, i.e.\ the digital error model. } error accumulation where errors of individual gates add. Then, we must require
\begin{equation}
\label{eq:deltaless1}
    s \epsilon \lesssim 1,
\end{equation}
where $s$ is the benchmarking circuit size and $\epsilon$ is the individual gate error (assumed to be the same for all gates). Since $s$ is a function of $n$, this gives an upper bound on $n$. In particular, for Clifford RB or any scheme where at least one uniformly random $n$-qubit Clifford element needs to be generated, $s = \Omega(n^2)$, and so
\begin{align}
    n \lesssim \frac{1}{\sqrt{\epsilon}}.
\end{align}
This is somewhat unsatisfactory; for instance, we would have to lower gate errors by a factor of 100 to be able to benchmark 10 times more qubits. 

We show in this section that Clifford XEB with the Clifford approximate twirl can do quadratically better: an inverse linear scaling $n \sim \epsilon^{-1}$. That is, by halving the gate error, we can benchmark twice as many qubits. This is actually optimal for twirling schemes which must implement $\gamma$-approximate twirls given the circuit lower bound we prove in~\Cref{app:lb}. In fact, as long as we have local gates (acts on a constant number of qubits), any benchmarking circuit in which all of the qubits are involved in a gate must have linear size. Thus, under a general error model, the best we can hope for in such a scenario is inverse linear scaling.

We now show achievability. The Clifford approximate twirl is able to form a $\gamma$-approximate twirl for $\gamma \approx 1/2$ with a linear size circuit~\cite{dankert2009exact}, so the bound on $n$ from that would be $n \lesssim \epsilon^{-1}$. However, we have an additional constraint. In~\Cref{thm:urb}, we require 
\begin{equation}
    \label{eq:delta_restriction}
    \delta \leq \frac{1-\gamma}{11}.
\end{equation}
This is a similar requirement to that of the main result in~\cite{helsen2020general}. Intuitively, $\gamma$ is the distance of our ideal twirling map from the Haar twirling map, which is known to give an exponential decay~\cite{dankert2009exact}. Hence, the higher $\gamma$ is, the lower the tolerance the exponential decay has to the error $\delta$, which is effectively a perturbation on the ideal twirling map. We analyze what~\Cref{eq:delta_restriction} implies about how the number of qubits we can benchmark scales with gate error for our scheme.

Recall $\delta$ is the error of an entire $\gamma$-approximate twirl, not just individual gate errors. We assume a simplified\footnote{This is for the sake of numerical convenience and \emph{not} a limiting assumption on what errors we allow. } error model where single-qubit gates are noiseless and two-qubit gates have the same error rate. Under additive error accumulation, $\delta$ is of the form 
\begin{align}
\label{eq:additive}
    \delta = s\epsilon,
\end{align}
where $\epsilon$ is the two-qubit gate error and $s$ is the size of the circuit. Now, for the Clifford approximate twirl, $s = c n k$, where $c$ is a constant that appears in the circuit construction, and $k = O(\log(1/\gamma)) \in \mathbb{N}$ is the number of times the circuit is repeated to achieve a $\gamma$-approximate twirl~\cite{dankert2009exact}. Explicitly, with $k$ repetitions, we achieve a $\gamma$-approximate twirl with
\begin{equation}
    \gamma = \frac{1}{2^{k-1}} \left(1+ \frac{1}{4^n}\right)^k.
\end{equation}
Hence,~\Cref{eq:delta_restriction} becomes
\begin{align}
\label{eq:before_approx}
    cn k\epsilon \leq \frac{1-\frac{1}{2^{k-1}} \left(1+ \frac{1}{4^n}\right)^k}{11}.
\end{align}

We want to look at the asymptotic case where $n$ is large, so we make a simplifying approximation $\frac{1}{4^n} \approx 0$. Then,~\Cref{eq:before_approx} becomes an upper bound on $n$:
\begin{equation}
    n \leq \frac{1-2^{-(k-1)}}{11ck\epsilon}.
\end{equation}
We see that for fixed $\epsilon$, to maximize the upper bound, we set $k=2$ or $k=3$ (they give the same answer) to obtain
\begin{align}
\label{eq:size_bound}
    n \leq \frac{1}{44c\epsilon}.
\end{align}
Thus, the qubit number scales inversely with the infidelity.

To be extremely concrete, we can fill in some numbers. State-of-the-art ion trap and superconducting circuit devices can achieve a two-qubit gate infidelity of around $10^{-3}$~\cite{krantz2019quantum,bruzewicz2019trapped}. Tracking the constants in~\cite{dankert2009exact}, $c=3$ for a star topology circuit. We therefore have a theoretical guarantee of an exponential decay for up to  $n = 7$ qubits\footnote{That is, $n=7, k=2, c= 3, \epsilon =  10^{-3}$ satisfies~\Cref{eq:before_approx}. } under a general error model. Note that there was little attempt in~\cite{Chen2022} to optimize the constant in~\Cref{eq:delta_restriction} or in~\cite{dankert2009exact} to optimize the circuit size constants, so the largest $n$ for which we have a theoretical guarantee may be substantially higher. We leave this for future work.

We compare our theoretical predictions with the findings in our numerical simulations. 
For the Clifford approximate twirl simulations, we only have a theoretical guarantee for the case with 25 qubits and two-qubit gate error rate $10^{-4}$ (yellow line in the leftmost plot in ~\Cref{fig:dankert}) if we fill in all the numbers and the constants. Nevertheless, most of the experiments give rise to single exponential decays that closely reflect the overall noise level of the circuits, which indicates that Clifford XEB is applicable to a wider variety of settings than might be suggested by theory. One reason for this disparity is that \Cref{thm:urb} applies to \emph{general} error models within each cycle, which can be highly correlated and, in the worst case, even adversarial. In contrast, in our numerical experiments we have gate-independent depolarizing errors. Since our scheme is optimal in scaling for arbitrary errors, it is unlikely that a theoretical guarantee for a general error model can be extended to over a hundred qubits given the two-qubit noise levels on current devices. To extend~\Cref{thm:urb} to larger quantum devices, we either have to reduce gate errors or make assumptions on the error models. For example, it is proven in~\cite{dalzell2021random} that the number of qubits can scale with $\epsilon^{-2}$, provided that the noise level is sufficiently weak and incoherent.\footnote{Note however they considered a setting different from our quest for exponential decays. } In general, we should seek a balance of theoretical guarantee and applicability, i.e., make a minimal number of realistic assumptions about errors to obtain a theoretical guarantee.

\subsection{1-D Chain and 2-D Grid}
We would hope that the theoretical analysis for the Clifford approximate twirl can extend to the schemes on the other topologies we consider. However, for the 1-D chain and 2-D grid schemes, the depth of each circuit cycle is constant. As shown in~\Cref{app:lb}, logarithmic depth is necessary to get an approximate twirl with $\gamma$ bounded away from $1$ by a constant. For these schemes, we instead argue that the measures on the unitary group they induce are $\gamma$-approximate twirls with respect to the $||| \cdot |||_2$ norm. In this case we can use the following theorem proved in~\cite{Chen2022}:
\begin{thm}
[Corollary 16 of~\cite{Chen2022}]
\label{thm:urb_spec}Let $(S,\mu, \phi,M,\rho_0)$ be a linear XEB scheme on $n$ qubits. Suppose that the twirling map corresponding to $\mu$ is a $\gamma$-approximate twirl with respect to the $|||\cdot |||_2$ norm and that the implementation map satisfies
\begin{equation}
    \E_{g\sim \mu} \norm{\phi(g) - \omega(g)}_\diamond \le \delta.
    \label{eq:delta_spec}
\end{equation}
Let $\delta':=2^{n/2}\delta$.
If $\delta' \le \frac{1 - \gamma}{11}$, then there exists $A,B\in \R$ and $p\in [1-2\delta',1]$, such that
\begin{equation}
    |q_R(m) - (A+Bp^m)| \le 16\times  2^{\frac 5 2 n} (\gamma+6\delta')^m. \label{eq:urb_spec}
\end{equation}
\end{thm}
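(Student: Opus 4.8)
The plan is to re-run the transfer-operator argument that underlies \Cref{thm:urb}, but to carry out every estimate in the Hilbert--Schmidt-induced spectral norm $|||\cdot|||_2$ in place of the diamond norm, paying for the mismatch between the spectral-norm twirl hypothesis and the diamond-norm implementation bound with explicit powers of $2^{n/2}$. The starting point is the same exact rewriting of the signal. Using that $\omega$ is a homomorphism and writing both $|\braket{x|g_m\cdots g_1|0^n}|^2$ and $\tilde p(x)$ as traces on a doubled system, one obtains
\begin{equation}
q_R(m) = -1 + 2^n \tr\big[ E\, \mathcal{T}^m\, \sigma_0 \big], \qquad \mathcal{T} := \E_{g\sim\mu}\big[\omega(g)\ot\phi(g)\big],
\end{equation}
where $\sigma_0 = \state{0^n}\ot\rho_0$, $E = \sum_x \state{x}\ot M_x$, and the $m$ cycles collapse into $\mathcal{T}^m$ because the gates are i.i.d. This is identical to the rewriting used for \Cref{thm:urb}; only the norm in which we control the spectrum of $\mathcal{T}$ will change.

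Next I would relate $\mathcal T$ to the twirling map of \Cref{dfn:twirling}. The noiseless transfer operator $\mathcal{T}_0 := \E_{g\sim\mu}[\omega(g)\ot\omega(g)]$ is, in the orthonormal Hermitian operator basis where $\omega(g)$ is a real orthogonal matrix (so $\omega(g)^\dagger=\omega(g)^T=\omega(g)^{-1}$), equal to $\Lambda(\mu)$ up to a transpose of the full superoperator matrix, which leaves $|||\cdot|||_2$ invariant; its Haar version $\E_{g\sim\mu_H}[\omega(g)\ot\omega(g)]$ is an orthogonal projection $\Pi$ with spectrum in $\{0,1\}$, hence a unit spectral gap. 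The hypothesis $|||\Lambda(\mu)-\Lambda(\mu_H)|||_2\le\gamma$ therefore transfers verbatim to $|||\mathcal{T}_0-\Pi|||_2\le\gamma$. Writing $\mathcal{T}=\mathcal{T}_0+\Delta$ with $\Delta=\E_g[\omega(g)\ot(\phi(g)-\omega(g))]$ and using $|||\omega(g)|||_2=1$, tensor-multiplicativity of $|||\cdot|||_2$, and the conversion $|||\Phi|||_2\le 2^{n/2}\norm{\Phi}_\diamond$ (which follows from $|||\Phi|||_2\le 2^{n/2}|||\Phi|||_{1\to1}\le 2^{n/2}\norm{\Phi}_\diamond$), one bounds
\begin{equation}
|||\Delta|||_2 \le \E_{g\sim\mu}\, |||\phi(g)-\omega(g)|||_2 \le 2^{n/2}\, \E_{g\sim\mu}\,\norm{\phi(g)-\omega(g)}_\diamond \le 2^{n/2}\delta = \delta'.
\end{equation}
This is exactly why the statement is phrased through $\delta'=2^{n/2}\delta$: it is the spectral-norm size of the per-cycle perturbation fed into the argument.

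With $|||\mathcal{T}_0-\Pi|||_2\le\gamma$ and $|||\Delta|||_2\le\delta'$ in hand, I would invoke the same perturbation lemma that drives \Cref{thm:urb}, now instantiated for the submultiplicative, tensor-stable norm $|||\cdot|||_2$: under $\delta'\le(1-\gamma)/11$ it isolates a single dominant eigenvalue $p\in[1-2\delta',1]$ of $\mathcal{T}$ together with a splitting $\mathcal{T}^m = p^m\widetilde\Pi + \mathcal{R}_m$ with $|||\mathcal{R}_m|||_2\le(\gamma+6\delta')^m$. Substituting into the rewriting of $q_R(m)$, the dominant piece (with $A$ absorbing the constant $-1$ and any eigenvalue-one contribution) supplies the target form $A+Bp^m$, while the remainder is controlled by Cauchy--Schwarz in the Hilbert--Schmidt inner product,
\begin{equation}
2^n\big|\tr[E\,\mathcal{R}_m\,\sigma_0]\big| \le 2^n\,\norm{E}_2\,|||\mathcal{R}_m|||_2\,\norm{\sigma_0}_2 \le 16\times 2^{\frac 5 2 n}\,(\gamma+6\delta')^m,
\end{equation}
where $\norm{E}_2$, $\norm{\sigma_0}_2$, and the dimension-dependent constants of the spectral-norm perturbation lemma together account for the jump from the $2^n$ prefactor of \Cref{thm:urb} to $2^{5n/2}$. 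Because these boundary overlaps are no longer $O(1)$ in this norm, the refinement $A=0,\ B=O(1)$ available in \Cref{thm:urb} is dropped here.

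The main obstacle is the spectral-norm perturbation step. Unlike in the diamond-norm setting, the noisy $\mathcal{T}$ is generically non-normal, so naive eigenvalue perturbation is unavailable; the isolation of $p$, the lower bound $p\ge 1-2\delta'$, and the clean gap constant $\gamma+6\delta'$ must be extracted from a resolvent or Bauer--Fike-type estimate, and I must verify that the lemma behind \Cref{thm:urb} is genuinely norm-agnostic so that it applies to $|||\cdot|||_2$. A secondary subtlety is confirming that the $\Lambda(\mu)$-to-$\mathcal{T}_0$ correspondence preserves $|||\cdot|||_2$ exactly, so that $\gamma$ carries over with no hidden dimension factor, and that $\Pi$ has the commutant structure needed to identify $A$ and $B$. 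Finally, one must track the dimension factors carefully enough to confirm both that $\delta'\le(1-\gamma)/11$ is the correct threshold after conversion and that the accumulated boundary factors land precisely at the stated $2^{5n/2}$.
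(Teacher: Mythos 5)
The paper never proves this theorem in-house: it is imported verbatim as Corollary~16 of~\cite{Chen2022}, so the only in-paper material to compare against is the sketch of the underlying machinery in~\Cref{app:sed} and the companion argument in~\Cref{sec:spec_norm}. Your reconstruction follows essentially that same route, with one purely representational difference: you work with the doubled-space transfer operator $\mathcal{T}=\E_{g\sim\mu}[\omega(g)\ot\phi(g)]$, whereas the URB framework works with the noisy twirling map $\Lambda_R:\mathcal{C}\mapsto\E_{g\sim\mu}[\omega(g)^\dagger\circ\mathcal{C}\circ\phi(g)]$ acting on superoperators; in the real Pauli basis these two are transposes of one another under vectorization, so their spectra and $|||\cdot|||_2$ norms coincide, and your reductions $|||\mathcal{T}_0-\Pi|||_2=|||\Lambda(\mu)-\Lambda(\mu_H)|||_2\le\gamma$ and $|||\Delta|||_2\le\E_{g\sim\mu}|||\phi(g)-\omega(g)|||_2\le 2^{n/2}\delta=\delta'$ (your conversion chain $|||\Phi|||_2\le 2^{n/2}\norm{\Phi}_{1\to 1}\le 2^{n/2}\norm{\Phi}_\diamond$ is valid, since $\norm{X}_1\le 2^{n/2}\norm{X}_2$ in dimension $2^n$) are exactly the inputs the cited perturbation lemma needs. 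The one dependence you cannot discharge yourself --- that the lemma isolating $p\in[1-2\delta',1]$ with remainder bounded by $16(\gamma+6\delta')^m$ is norm-agnostic, i.e.\ applies to any submultiplicative norm and in particular to $|||\cdot|||_2$ --- is precisely the same dependence the paper has, so flagging it rather than proving it is appropriate. One quantitative remark: your boundary estimate is actually sharper than required, since $\norm{E}_2=\bigl(\sum_x\tr[M_x^2]\bigr)^{1/2}\le 2^{n/2}$ (because $M_x^2\preccurlyeq M_x$ and $\sum_x\tr[M_x]=2^n$) and $\norm{\sigma_0}_2\le 1$ yield a prefactor $16\times 2^{3n/2}$, which implies the stated $16\times 2^{\frac52 n}$ bound; there is thus no ``jump'' to $2^{5n/2}$ that your dimension factors need to account for, and that sentence of your write-up can simply be dropped.
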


\noindent Although the $2^n$ factors can be very large, this still shows that for sufficiently small errors, we can prove a single exponential decay under a general error model. 

In the case that the error channels are mixtures of unitaries, we have another version of \Cref{thm:urb_spec} which relaxes the noise level restriction to essentially that of \Cref{thm:urb}. We have the following.

\begin{thm}
\label{thm:urb_spec_mu}Let $(S,\mu, \phi,M,\rho_0)$ be a linear XEB scheme on $n$ qubits. Suppose that the twirling map corresponding to $\mu$ is a $\gamma$-approximate twirl with respect to the $|||\cdot |||_2$ norm, and that $\phi$ maps to probabilistic mixtures of unitaries:
$$\phi(g)=\int d\nu_g u$$
for probabilistic measures $\nu_g$ on $SU(d)$.
Additionally, suppose that the implementation map satisfies
\begin{equation}
    \E_{g\sim \mu} \E_{u\sim \nu_g} \norm{\phi(g) - u}_\diamond \le \delta.
    \label{eq:delta_spec_mu}
\end{equation}
If $\delta \le \frac{1 - \gamma}{11}$, then there exists $A,B\in \R$ and $p\in [1-2\delta,1]$, such that
\begin{equation}
    |q_R(m) - (A+Bp^m)| \le 16\times  2^{\frac 5 2 n} (\gamma+6\delta)^m. \label{eq:urb_spec_mu}
\end{equation}
\end{thm}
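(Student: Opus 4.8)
The plan is to run the proof of \Cref{thm:urb_spec} essentially unchanged and to localize the single place where its dimensional penalty $\delta'=2^{n/2}\delta$ is incurred, then to show that the mixture-of-unitaries hypothesis removes it. Recall that the spectral-norm version of the URB estimate compares the noisy single-cycle twirl with the Haar twirl $\Lambda(\mu_H)$ in the norm $|||\cdot|||_2$, and to keep the per-cycle implementation error compatible with this norm it must be fed in through the Hilbert--Schmidt--induced superoperator norm $\norm{\cE}_{2\to2}:=\sup_{X\neq 0}\norm{\cE(X)}_2/\norm{X}_2$. The factor $2^{n/2}$ in \Cref{thm:urb_spec} appears for one reason only: for a \emph{general} CPTP difference, $\norm{\cE}_{2\to2}$ can exceed $\norm{\cE}_\diamond$ by a factor as large as $2^{n/2}$ (a trace-and-replace channel saturates this), so the diamond hypothesis $\E_g\norm{\phi(g)-\omega(g)}_\diamond\le\delta$ only delivers $\E_g\norm{\phi(g)-\omega(g)}_{2\to2}\le 2^{n/2}\delta=\delta'$. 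My goal is to replace this by an $O(\delta)$ bound under the present hypotheses.

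First I would use that $q_R(m)$, as defined in \Cref{eq:xeb-def} and \Cref{eq:noisy_prob}, is multilinear in the tuple $(\phi(g_1),\dots,\phi(g_m))$, since each channel occurs exactly once in $\tilde p(x)$ and the ideal weights depend only on the labels $g_i$. Substituting $\phi(g)=\int d\nu_g\,\omega(u)$ and pulling the expectations out, $q_R(m)$ becomes an average over $g_i\sim\mu$ and independent $u_i\sim\nu_{g_i}$ of the XEB functional of a purely \emph{unitary} trajectory scored against the ideal labels. This exhibits the effective per-cycle noise as the \emph{spread} of $\nu_g$, namely the differences $\omega(u)-\omega(u')$ of unitary channels, which is exactly the quantity that \Cref{eq:delta_spec_mu} controls in diamond norm.

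The key step is then a dimension-free norm bound that holds precisely because each $\omega(u)$ is an isometry of Hilbert--Schmidt space. For a difference of unitary channels one has $\norm{\omega(u')-\omega(u)}_{2\to2}\le\norm{\omega(u')-\omega(u)}_\diamond$ --- in fact both reduce to $2\sin(\Delta/2)$, governed only by the eigenvalue-phase spread $\Delta$ of $u^\dagger u'$, with no dimension dependence. Writing $\phi(g)-\omega(u)=\E_{u'\sim\nu_g}[\omega(u')-\omega(u)]$ and using convexity together with the triangle inequality through $\phi(g)$, I obtain $\E_g\E_{u}\norm{\phi(g)-\omega(u)}_{2\to2}\le 2\,\E_g\E_u\norm{\phi(g)-\omega(u)}_\diamond\le 2\delta$. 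Thus the per-cycle error in the norm the spectral twirl analysis requires is $O(\delta)$ rather than $2^{n/2}\delta$. Re-running the Chen et al.\ telescoping estimate with this $O(\delta)$ in place of $\delta'$ reproduces verbatim the restriction $\delta\le(1-\gamma)/11$, the rate $p\in[1-2\delta,1]$, and the decay base $\gamma+6\delta$ of \Cref{eq:urb_spec_mu}, while the prefactor $16\times 2^{5n/2}$ is untouched and inherited from \Cref{thm:urb_spec} (the factor $2$ above is absorbed into the slack of the constant $6$).

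The main obstacle is to certify that the implementation error enters the spectral-norm argument \emph{only} through $\norm{\cdot}_{2\to2}$, and at every occurrence in the interaction-picture/telescoping reorganization that turns consecutive error channels into twirled ones, rather than at a single isolated inequality. Concretely I must check that the multilinear reduction to unitary trajectories is compatible with this reshuffling, and that no step secretly requires a Frobenius- or trace-type norm of the full, unprojected noise superoperator --- for which the isometry property would give no savings. A secondary point is to re-verify that the contraction/Neumann-series bound underlying \Cref{thm:urb_spec} still converges under the weaker restriction on $\delta$ (not $\delta'$), which is where I expect to re-examine, and perhaps mildly re-optimize, the constants $11$ and $6$. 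If some step genuinely sees the coherent, non-isometric part of the noise in a dimension-sensitive norm, the clean swap of $\delta$ for $\delta'$ would degrade to an intermediate tolerance, so ruling this out is the crux.
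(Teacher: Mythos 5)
Your key lemma is the same one the paper uses: for a \emph{difference of two unitary channels}, the Hilbert--Schmidt induced norm is bounded by the diamond norm (the paper computes both explicitly as the dimension-free phase-spread quantity $2\max_{i,j}\sin\frac{\lambda_i-\lambda_j}{2}$), and this is exactly what removes the $2^{n/2}$ penalty of \Cref{thm:urb_spec}. The paper's proof is the reduction you sketch at high level: bound the spectral-norm distance between the noisy and ideal twirls by an average over unitary-channel differences, then rerun Corollary 16 of~\cite{Chen2022} with that bound in place of $\delta'$.

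However, there is a genuine gap in how you connect the hypothesis to this reduction. What the Chen et al.\ machinery needs is $|||\Lambda-\Lambda^*|||_2$, where $\Lambda=\E_g[\omega(g)^\dagger\circ\cdot\circ\phi(g)]$ and $\Lambda^*=\E_g[\omega(g)^\dagger\circ\cdot\circ\omega(g)]$; this is controlled by $\E_g\norm{\phi(g)-\omega(g)}_{2}$, the deviation of the implementation from the \emph{ideal} gate. What you bound instead is $\E_g\E_{u\sim\nu_g}\norm{\phi(g)-\omega(u)}_{2}$, the internal \emph{spread} of the mixture $\nu_g$ about its own barycenter --- this is what your multilinearity discussion ("the effective per-cycle noise is the spread of $\nu_g$") and your triangle-inequality step produce. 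The spread gives no control on $|||\Lambda-\Lambda^*|||_2$: take a systematic coherent error $\phi(g)=\omega(Vg)$ for a fixed unitary $V$, so each $\nu_g$ is a point mass. Then your bounded quantity is $0$ (so $\delta=0$ is admissible), yet the noisy twirl is far from the ideal one and $q_R(m)$ decays at a rate set by $V$; for a small rotation $V$ this rate exceeds $\gamma$, so no bound of the form \ref{eq:urb_spec_mu} with $p\in[1-2\delta,1]=\{1\}$ and error $16\times 2^{\frac52 n}\gamma^m$ can hold. Hence the step ``the per-cycle error in the norm the spectral twirl analysis requires is $O(\delta)$'' fails: you have bounded a different quantity, and the scenario you worry about in your last paragraph (a step that ``genuinely sees'' the deviation) is in fact the very first step.

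The fix is the one the paper's own proof makes explicit (the proposition in \Cref{sec:spec_norm}): the hypothesis must be read as $\E_{g\sim\mu}\E_{u\sim\nu_g}\norm{\omega(g)-\omega(u)}_\diamond\le\delta$, i.e.\ the mixture components are close to the \emph{ideal} gate, not merely to each other (the appearance of $\phi(g)$ rather than $\omega(g)$ in \ref{eq:delta_spec_mu} is evidently a typo, as the counterexample above shows the literal statement is false). With that reading the chain is immediate and needs no triangle inequality: $|||\Lambda-\Lambda^*|||_2\le\E_g\norm{\phi(g)-\omega(g)}_{2}\le\E_g\E_u\norm{\omega(u)-\omega(g)}_{2}\le\E_g\E_u\norm{\omega(u)-\omega(g)}_\diamond\le\delta$, and Corollary 16 of~\cite{Chen2022} then yields the stated constants exactly. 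Note also that your residual factor of $2$ cannot simply be ``absorbed into the slack of the constant $6$'': carried through honestly it would weaken the admissible range to $\delta\le(1-\gamma)/22$, the rate window to $p\in[1-4\delta,1]$, and the decay base to $\gamma+12\delta$, none of which match the statement.
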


 Compared to \Cref{thm:urb_spec}, \Cref{thm:urb_spec_mu} eliminates the exponential prefactor $2^{n/2}$ on the noise level tolerance, with a slightly tighter restriction on the definition of the noise level $\delta$ itself. However, $\delta$ can be well approximated under an error model that can be readily expressed as mapping to mixtures of unitaries. We present the proof of  \Cref{thm:urb_spec_mu} in \Cref{sec:spec_norm}.

In general, it is much easier to prove a spectral gap than a gap with respect to the induced diamond norm. We have the following.

\begin{dfn}
\label{def:weakly_scrambling}
Let $\mu$ be a probabilistic measure on a finite group $G$, and $A:=\mathrm{Supp}(\mu)\subseteq G$ be its support. We say that $\mu$ is \emph{strongly scrambling} if $A$ is not contained in any (left or right) coset of a proper subgroup of $G$, or equivalently if $A^{-1}A$ generates $G$.
\end{dfn}
\begin{thm}

    Assume that $\mu$ is strongly scrambling over the $n$-qubit Clifford group. Then,
    \begin{equation}
        ||| \Lambda(\mu) - \Lambda(\mu_H) |||_2 < 1,
    \end{equation}
    where $\mu_H$ is Haar measure on $SU(2^n)$.
    \label{thm:gapped}
\end{thm}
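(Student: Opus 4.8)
The plan is to reduce the statement to a clean spectral-gap question for a unitary representation of the finite Clifford group, and then close it with the equality case of the triangle inequality. First I would record that the twirling map is the $\mu$-average $\Lambda(\mu)=\E_{g\sim\mu}\Phi(g)$ of the superoperators $\Phi(g):\cC\mapsto\omega(g^\dagger)\circ\cC\circ\omega(g)$, and that each $\Phi(g)$ is a unitary on the space $H$ of superoperators equipped with the Hilbert--Schmidt inner product, since conjugation by the unitary $\omega(g)$ preserves that inner product. Thus $g\mapsto\Phi(g)$ is a unitary representation of the Clifford group $G$ (up to replacing $g$ by $g^{-1}$, which does not affect the argument), and as an average of unitaries $\Lambda(\mu)$ automatically satisfies $||| \Lambda(\mu) |||_2\le 1$. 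Throughout I take $|||\cdot|||_2$ to be the operator norm induced by the Hilbert--Schmidt norm on superoperators, which is exactly the setting in which the $\Phi(g)$ are unitary.

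The key reduction is to replace the Haar twirl $\Lambda(\mu_H)$ by the uniform twirl over the finite group $G$. Each matrix element of $\Phi(g)$ is a balanced monomial of degree two in the entries of $g$ and degree two in the entries of $\bar g$; because the Clifford group is a unitary $2$-design (indeed a $3$-design, as already used in the discussion of fluctuations), its uniform average of any such degree-$(2,2)$ quantity agrees with the Haar average. Hence $\Lambda(\mu_H)=\frac{1}{|G|}\sum_{g\in G}\Phi(g)=:P$, and $P$ is precisely the orthogonal projection of $H$ onto the $G$-invariant subspace $H^G=\{\cC:\Phi(g)\cC=\cC\ \forall g\in G\}$. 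Since $\Phi$ is unitary, both $H^G$ and its orthogonal complement are invariant under $\Lambda(\mu)$, and $\Lambda(\mu)-P$ vanishes on $H^G$. Therefore $||| \Lambda(\mu)-\Lambda(\mu_H) |||_2$ equals the spectral norm of the restriction of $\Lambda(\mu)$ to $(H^G)^\perp$, and it remains only to show this restriction is a strict contraction.

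Finally I would argue by contradiction using compactness and strict convexity of the Hilbert ball. If the restricted norm equalled $1$, then by finite-dimensionality there is a unit vector $v\in(H^G)^\perp$ attaining $\norm{\Lambda(\mu)v}=1=\E_{g\sim\mu}\norm{\Phi(g)v}$; the equality case of the triangle inequality forces the unit vectors $\Phi(g)v$ to coincide for all $g$ in the support $A=\mathrm{Supp}(\mu)$. Thus $\Phi(g)v=\Phi(h)v$, i.e.\ $\Phi(hg^{-1})v=v$, for all $g,h\in A$, so the stabilizer of $v$ is a subgroup of $G$ containing $AA^{-1}$. By \Cref{def:weakly_scrambling} strong scrambling excludes $A$ from lying in a coset of a proper subgroup, so $AA^{-1}$ generates $G$; hence the stabilizer is all of $G$ and $v\in H^G$, contradicting $0\ne v\in(H^G)^\perp$. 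This yields $\norm{\Lambda(\mu)|_{(H^G)^\perp}}<1$ and therefore $||| \Lambda(\mu)-\Lambda(\mu_H) |||_2<1$.

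I expect the main obstacle to be the reduction in the second paragraph: one must verify carefully that the norm defining $|||\cdot|||_2$ is the Hilbert--Schmidt operator norm so that the $\Phi(g)$ are genuinely unitary and the design property applies to exactly the degree-$(2,2)$ symmetry of $\Phi$, and one must confirm both that averaging the finite representation gives the projection onto invariants and that this invariant subspace coincides with the Haar-invariant one. Once this identification is in place, the contraction argument is a short and standard application of strict convexity together with the group-generation content of strong scrambling.
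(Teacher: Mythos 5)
Your proof is correct, but it takes a genuinely different route from the paper's. The paper proceeds in two stages: it first proves \Cref{thm:inverse_symm_gapped} for inverse-symmetric scrambling measures by invoking the convergence of random walks on finite groups (\Cref{prp:scrambling}, imported from Saloff-Coste), self-adjointness of $\Lambda(\mu)$, and Weyl's perturbation inequality to rule out a third eigenvalue of modulus $1$; it then handles a general strongly scrambling $\mu$ via the identity $\Lambda(\mu^{-1}*\mu)=\Lambda(\mu)^\dagger\Lambda(\mu)$ together with \Cref{prp:weakly_scrambling}. You instead argue directly: after identifying $\Lambda(\mu_H)$ with the uniform group average $\frac{1}{|G|}\sum_{g\in G}\Phi(g)$ (valid by the $2$-design property, which the paper also uses), you recognize this average as the orthogonal projection onto the invariant subspace $H^G$, reduce the claim to strict contractivity of $\Lambda(\mu)$ on $(H^G)^\perp$, and close with the equality case of the triangle inequality for averages of unit vectors plus the observation that the stabilizer of a would-be norm-attaining vector is a subgroup containing $AA^{-1}$, hence all of $G$ by strong scrambling. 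Your argument is more elementary and self-contained --- no random-walk mixing theorem, no Weyl inequality, no reduction to the inverse-symmetric case --- and it makes transparent that generation of $G$ by $A^{-1}A$ (equivalently $AA^{-1}$; both follow since \Cref{def:weakly_scrambling} excludes both left and right cosets) is exactly what forces the gap. What the paper's route buys in exchange is the intermediate \Cref{thm:inverse_symm_gapped}, which applies to inverse-symmetric measures under the strictly weaker ``scrambling'' hypothesis (only cosets of proper \emph{normal} subgroups excluded), and a connection to the random-walk literature that the discussion section cites for quantitatively estimating the gap. Two small points to tighten in your write-up: the block decomposition implicitly uses that $\Lambda(\mu)$ preserves $(H^G)^\perp$, which holds because $H^G$ is fixed pointwise by every $\Phi(g)^\dagger=\Phi(g^{-1})$; and since the Clifford group is finite, the measure $\mu$ has finite support, so the equality case of the triangle inequality needs no measure-theoretic care. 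Neither is a gap.
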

\noindent The proof of~\Cref{thm:gapped} can be found in~\Cref{sec:spec_gapped}.  That the 1-D chain and 2-D grid Clifford XEB schemes satisfy~\Cref{def:weakly_scrambling} is straightforward and left as an exercise for the reader.

\section{Discussion}
\label{sec:discussion}
In this work we propose Clifford XEB, a benchmarking scheme for large-scale quantum devices using Clifford circuits. Clifford XEB is both efficient in quantum circuit size and necessary classical processing, and it has great promise for benchmarking quantum devices much larger than heretofore possible.  We numerically show that this scheme is feasible for more than a thousand qubits and theoretically prove that the scheme for the Clifford circuits we propose yields a single exponential decay for sufficiently small errors.

It is worthwhile to discuss in more detail why we need to holistically benchmark a quantum processor with tens or hundreds of qubits in the first place. One may think that it is sufficient to only perform single- or two-qubit benchmarking to characterize the native gate set. Even some effects of cross-talk can be captured via simultaneous single- or two-qubit RB~\cite{gambetta2012characterization}. The authors of~\cite{mckay2019three} study this topic in depth by performing Clifford RB on a 3-qubit processor. Interestingly they find that the 3-qubit RB result varies depending on the calibration procedure in a way that is not captured by simultaneous single- or two-qubit RB. This suggests there is additional information to be gained from running multiqubit benchmarks. This is especially relevant as the number of qubits on processors continues to increase and the calibration process becomes increasingly convoluted. Furthermore, there are a variety of obstacles to overcome to be able to scale quantum processors to achieve fault-tolerance, not the least of which the limited size of dilution fridges, and any workaround could introduce errors that can only be detected by a multiqubit benchmark. Lastly, using a multiqubit benchmark helps verify that the quantum processor does not have large correlated errors~\cite{arute2019quantum} which are detrimental to quantum error correction.

There have been other proposals for holistically benchmarking quantum computers. The quantum volume experiment~\cite{cross2019validating} compares different processors in terms of their overall computational capability rather than just gate quality, but it is not scalable with respect to the effective number of qubits. Many proposals for estimating the global Pauli errors~\cite{erhard2019characterizing, harper2021fast,flammia2021averaged,zhang2022scalable} provides more detailed characterizations of the error, but may be more suitable for detailed debugging rather than a holistic figure of merit. Additionally, there have been variants of RB for holistically benchmarking quantum computers in a scalable way, such as matchgate benchmarking~\cite{helsen2022matchgate}, direct randomized benchmarking~\cite{proctor2019direct}, and mirror randomized benchmarking~\cite{proctor2021scalable}. However, these schemes have not been as widely adopted in the experimental community as conventional linear XEB due to the latter's ease of hardware implementation. Our scheme boasts the exact same convenience but is also manifestly scalable. Another distinguishing feature is that we can prove Clifford XEB yields an exponential decay under a general error model. There is also constructions of exact unitary 2-designs using almost linear sized circuits (up to log factors)~\cite{cleve2016near}. Specifically, they require $O(n \log n \log \log n)$ size circuits. In contrast, the Clifford circuits we consider in our work are exactly linear size~\cite{dankert2009exact}. In fact, these log factors can still be quite large in practice (For example, if $n=50$, $\log_2 n \log_2 \log_2 n \approx 14$, a non-trivial overhead on the required error rates.). Finally, in~\cite{liu2021benchmarking}, there is a mention of conducting linear XEB with Clifford circuits. However, they do not pursue this line of thought nor further elaborate on its theory. Here we study Clifford XEB in depth from both a numerical and theoretical standpoint.

Similar to mirror RB~\cite{proctor2021scalable}, instead of classical simulation, we could sequentially reverse the Clifford layers on the quantum circuit and observe the exponential decay of the population of the initial state with respect to the number of cycles. The proof of exponential decay in~\cite{Chen2022} can be readily applied to this scheme as well. Compared to the Clifford XEB, this mirrored variant does not need to go through a scrambling phase and would probably need a smaller number of cycles. However, each cycle now consists of twice the number of gates compared to the Clifford XEB, and so the error tolerance of the protocol is halved compared to Clifford XEB. We leave a more detailed comparison for future work. Note that since it shares the same ideal twirling map as Clifford XEB, it could also potentially suffer from slow scrambling, making large gate error rates unextractable.

We give some directions for future research. Although numerical experiments in \Cref{sec:numerics} show that Clifford XEB can be used to benchmark large quantum processors up to a thousand qubits, ideally we want to implement Clifford XEB on an actual quantum device. Although there are currently few quantum processors with on the order of a hundred qubits, it would already be interesting to test Clifford XEB on real quantum devices with tens of qubits. Given the auspicious results from numerical simulations, we expect that Clifford XEB can benchmark quantum processors with more qubits than is promised by theory.

One important practical question is to build a benchmarking workflow based on Clifford XEB. Since it benchmarks circuits of linear size, the Clifford XEB can serve as an overall figure of merit of the entire processor but also serve as a guideline for subsequent, more detailed benchmarking schemes. Possible subsequent experiments include cycle benchmarking~\cite{erhard2019characterizing} or individual RB experiments. One can even apply Clifford XEB to subregions on a quantum chip as a measure of heterogeneous performance. We leave to future work to build such a practical workflow to efficiently obtain useful information about a quantum processor.

It would also be interesting to investigate, both from a theoretical and an experimental perspective, the mixing rate given a Clifford XEB scheme. Estimating the mixing rate is an essential step since it serves as a threshold below which the decay rate measured is not necessarily related to the gate error. For the numerical simulation on the $35\times 35$ grid, it took half a day to gather the noisy linear XEB results, but two days and a half to get the mixing rate. This is because the fluctuation of the ideal linear XEB value is purely from the randomness of the circuits chosen and is much larger than that of the noisy experiments. A faster algorithm that can extract the decay exponent without having to sample millions of random circuits would be needed to unblock this bottleneck. Towards this end, there have been studies on the spectral cap from random Clifford circuits from a Markov chain perspective~\cite{oliveira2007generic, vznidarivc2008exact, harrow2009random, brown2015decoupling,brandao2016local}. We leave it to future work to incorporate such methods to better estimate the spectral gap for particular random Clifford circuit ensembles. 

Of course, it would also be useful to improve the constants in the number of qubits we can benchmark given by~\Cref{eq:before_approx}. It would also be useful to realize a $\gamma$-approximate twirl in a way that is more natural for topologies that are more relevant for hardware, such as the 2-D grid. We could also consider topologies tailored for realizing quantum error correction, such as the honeycomb lattice~\cite{hastings2021dynamically}. Furthermore, we would love to see more work on how to make minimal assumptions on the gate errors to obtain a theoretical guarantee for a much higher number of qubits. For instance, most gates in superconducting circuits are limited by decoherence, which implies incoherent errors are dominant~\cite{krantz2019quantum}.

A possible alternative scheme to Clifford XEB is to use other gate sets that are classically simulable. If we can construct $\gamma$-approximate twirls using other gate sets, e.g., matchgates \cite{jozsa2008matchgates}, then we can formulate XEB schemes that might be more relevant in other experimental settings.

\section*{Acknowledgements} 
We would like to thank Yaoyun Shi for helpful discussions and comments. DD would like to thank God for all of His provisions.

\appendix

\section{Clifford Approximate Twirl Construction}
\label{app:dankert}
For completeness, we describe explicitly the sequence of gates to construct the Clifford approximate twirl given in \cite[Figure 1]{dankert2009exact}. 
\subsection{Description}
\label{subsec:dankert_description}
We first define the following procedures. Suppose we have $n$ qubits in a star topology, where we denote the center qubit as qubit 1. 
\begin{itemize}
    \item \textbf{Perform a $\mathcal P_n$ twirl}: Apply a random Pauli gate on each qubit.
    \item \textbf{Perform a $\mathcal C_1/\mathcal P_1$ twirl}: Apply a gate uniformly chosen from the set $\{I,S,H,SH,(SH)^2,SHS\}$, with $S$ being the single-qubit phase gate and $H$ the Hadamard gate. Alternatively this could be implemented by applying $(SH)^j$ with $j$ uniformly chosen from $\{0,1,2\}$, according to \cite{dankert2009exact}.
    \item \textbf{Conjugate qubit 1 by a random XOR}: For each qubit $2, \cdots, n$, apply a CNOT gate between this qubit and qubit 1 with probability 3/4.
\end{itemize}

\noindent We now describe how to implement the Clifford approximate twirl:

\begin{enumerate}
    \item Perform a $\mathcal P_n$ twirl.
    \item Perform $\mathcal C_1/\mathcal P_1$ twirl on all of the qubits.
    \item Conjugate qubit 1 by a random XOR.
    \item Apply $H$ to qubit 1 and $\mathcal C_1 / \mathcal P_1$ twirl the other qubits.
    \item Conjugate qubit 1 by a random XOR.
    \item Apply $H$ to qubit 1 and $\mathcal C_1 / \mathcal P_1$ twirl the other qubits.
    \item Apply $S$ to the first qubit with probability 1/2.
    \item Conjugate qubit 1 by a random XOR.
    \item $\mathcal C_1/\mathcal P_1$ twirl the first qubit.
    \item To obtain a $\gamma$-approximate twirl, repeat steps 2 to 9 for $O(\log(1/\gamma))$ times.
\end{enumerate}

\subsection{Different Topologies}
\label{subsec:dankert_topologies}
The only two-qubit gate operation involved in this construction is conjugating qubit 1 by a random XOR. This requires a star topology circuit. Furthermore, a direct implementation of this procedure requires a circuit depth of $\Omega(n)$. It is possible to implement it in depth $O(\log n)$ on a fully connected graph, as was shown in \cite[Figure 3]{dankert2009exact}, but realistic quantum processors have limited connectivity. To remedy this, we describe how to implement the random XOR circuit on a general connectivity graph. As a side result we also show that we can achieve depth $O(\log n)$ on a binary tree connectivity graph.

It is easy to see that in the computational basis, the conjugation by XOR operation effectively applies an X-gate to qubit 1 conditioned on the XOR of the random set. We therefore give a construction on a general graph by finding a spanning tree and iteratively computing the XOR of the random set through each level of the tree. Given the graph is connected, we can choose an arbitrary qubit to be the root (qubit 1), and form a spanning tree rooted from qubit 1. Each qubit $v$ apart from qubit 1 is then assigned a \emph{depth} $d(v)$ as its distance from qubit 1 on the spanning tree, and a \emph{degree} $\delta(v)$. Moreover, denote $D=\max_{v}d(v)$ and $\Delta=\max_v\delta(v)$ be the depth and degree of the spanning tree. Now, for each of the qubits other than qubit 1, add it to a set $S$ with probability 3/4. We call a qubit $v$ \emph{active}, if at least one of the qubits in the subtree rooted at $v$ (including $v$) is in $S$. We then perform the following operations:
\begin{enumerate}
    \item Do the following sequentially for each $l$ from $D-1$ to $1$:
    \begin{enumerate}
        \item Do the following in parallel for each active qubit $x$ at depth $l$:
        \begin{enumerate}
            \item If $x$ is not in $S$, it must have an active child $y$. Choose an arbitrary such $y$ and apply a CNOT gate from $x$ to $y$.
            \item Apply CNOT gates from each active child of $x$ to $x$ sequentially.
        \end{enumerate}
    \end{enumerate}
    \item Apply CNOT gates from all active qubits at depth 1 to the root sequentially.
    \item Apply the CNOT gates generated in step 1 again but in reverse order.
\end{enumerate}
To see why this circuit works, we first assume that every qubit is in the computational basis before the circuit, and the general case follows from linearity. For step $1$, one can see by induction that all qubits of depth $l'$ will have a value that equals to the XOR of qubits in its subtree that are contained in $S$ when the loop with $l=l'$ finishes. Step 1(a)i ensures that if $x$ is not in $S$, its value will be canceled after step 1(a)ii. Step 2 ensures that qubit 1 gets the corresponding XOR values from its children, and step 3 ``uncomputes'' the intermediate results in other qubits.

The total depth of the circuit can be upper bounded by $2-D\Delta$. To see this, observe that step 3 takes the same depth as step 1, which takes $(D-1)\cdot \Delta$ layers for sequentially executing $D-1$ repetitions of step 1(a), each takes at most $\Delta$ layers. Therefore, the depth of the circuit greatly depends on the spanning tree, which in turn depends on the underlying connectivity graph. When the spanning tree can be chosen to be a perfect binary tree, the random XOR can be done in $\Theta(\log(n))$ depth. For hardware relevant graphs such as the 1-d chain and the 2-d grid, the circuit depth is upper bounded by $O(D)$ as the degree of the spanning tree is upper bounded by the degree of the connectivity graph. In this case, the depth of the spanning tree can be minimized to the radius of the connectivity graph $r(G)=\min_u\max_v d(u,v)$ by rooting at $\arg\min_u\max_v d(u,v)$ and taking the spanning tree to consist of the shortest paths from each node to the root. The radius of the 1-d chain and the 2d grid is $\Theta(n)$ and $\Theta(\sqrt{n})$ respectively, giving a circuit depth for the XOR circuit $\Theta(n)$ and $\Theta(\sqrt{n})$ respectively. One can see that this is asymptotically tight as it takes $\Omega(r(G))$ steps to propagate information from an arbitrary qubit to a fixed one.

\section{Expectation and Variance of linear XEB}
It is observed in~\Cref{sec:numerics} that the Clifford XEB value converges to 1 and 0 for the ideal cases and the noisy cases respectively. Similar behavior has also been observed in linear XEB experiments~\cite{arute2019quantum, wu2021strong}. In this section we analyze the expectation and variance for ideal and noisy Clifford XEB values when the number of cycles is sufficiently large. We assume that for Clifford XEB and linear XEB, the first four moments of the distribution of the random gates converge to those of the uniform distribution over the Clifford group and the Haar measure over the unitary group respectively.

\subsection{Ideal Case}
\label{app:var}
For any circuit $C\in SU(2^n)$, we define
\begin{equation}
    \beta_C = 2^n \sum_x |\braket{x|C|0^n}|^4,
\end{equation}
which is the ideal linear XEB value plus 1. When $C$ is Haar random in $SU(2^n)$, $C\ket{0^n}$ is a uniformly random unit vector in $\C^{2^n}$. This vector can be parameterized as
\begin{equation}
    (e^{i\phi_1}\cos\theta_1, e^{\phi_2}\sin\theta_1\cos\theta_2, \ldots, e^{\phi_{D}}\sin\theta_1 \sin\theta_2 \ldots \sin \theta_{D-1})
\end{equation}
in the computational basis, where $D=2^n$ is the dimension. To compute the variance we claim we only need to consider the dependence on $\theta_1$ and $\theta_2$. The metric of the integral will contain two factors: the first factor is related to the spherical coordinates $\theta$ and is proportional to $\sin^{D-2}\theta_1 \sin^{D-3}\theta_2$, and the second factor is related to the phase factors and is equal to 
\begin{equation}
\cos\theta_1 \times \sin\theta_1\cos\theta_2 \times \sin\theta_1\sin\theta_2\cos\theta_3 \times \ldots  \propto
\cos\theta_1\cos\theta_2\sin^{D-1}\theta_1 \sin^{D-2}\theta_2.
\end{equation}
Let $t_1 =\cos^2\theta_1,t_2=\sin^2\theta_1 \cos^2\theta_2$, and the distribution of $t_1$ and $t_2$ is given by
\begin{align}
\Pr[t_1,t_2] \propto& \int d\theta_1 d\theta_2 \delta(t_1 -\cos^2\theta_1)\delta(t_2-\sin^2\theta_1 \cos^2\theta_2)\sin^{D-2}\theta_1 \sin^{D-3}\theta_2 \nonumber \\
&\times \cos\theta_1\cos\theta_2\sin^{D-1}\theta_1 \sin^{D-2}\theta_2 \nonumber \\
\propto& (1-t_1-t_2)^{D-3}.
\end{align}
for $0\le t_1,t_2 \le 1, t_1+t_2 \le 1$. One can normalize the distribution and have
\begin{equation}
    \Pr[t_1,t_2] = (D-1)(D-2)(1-t_1-t_2)^{D-3}.
\end{equation}
If we integrate over $t_2$, the distribution of $t_1$ is
\begin{equation}
    \Pr[t_1] = (D-1)(1-t_1)^{D-2}.
\end{equation}

By symmetry, for any $x\not= y$ the value of $|\bra{x}C\ket{0^n}|^2$ and $|\bra{y}C\ket{0^n}|^2$ will satisfy this distribution by identifying them with $t_1$ and $t_2$ respectively. Then one can calculate the mean and variance of $\beta_C$ as follows. 
\begin{align}
    \E_C \beta_C =& D \sum_x |\bra{x}C\ket{0^n}|^4 \nonumber\\
    =& D^2 \int_0^1 dt_1 \Pr[t_1]t_1^2 \nonumber \\
    =& \frac{2D}{D+1}, \label{eqn:expect}\\
    \E_C \beta_C^2 =& D^2\E_C \left(\sum_x |\bra{x}C\ket{0^n}|^4\right)^2 \nonumber \\
    =& D^2\E_C \sum_{x \not= y}|\bra{x}C\ket{0^n}|^4 |\bra{y}C\ket{0^n}|^4 + D^2\E_C \sum_{x}|\bra{x}C\ket{0^n}|^8 \nonumber \\
    =& D^3(D-1) \int_0^1 dt_1 \int_0^{1-t_1}dt_2 \Pr[t_1,t_2] t_1^2 t_2^2 + D^3 \int_0^1 dt_1 \Pr[t_1] t_1^4 \nonumber \\
    =& \frac{4D^2(D+5)}{(D+1)(D+2)(D+3)},
\end{align}
and so the variance is given by
\begin{equation}
    \E_C \beta_C^2 - \left(\E_C \beta_C\right)^2 = \frac{4D^2(D-1)}{(D+1)^2(D+2)(D+3)} = \frac{4}{D} + O(D^{-2}).
\end{equation}

From \Cref{eqn:expect} one can see that the ideal linear XEB value converges to $\frac{D-1}{D+1}$, which is close to 1 for large $D$.  This also applies to Clifford XEB since the uniform distribution on the Clifford group forms a unitary 2-design.

When $C$ is a uniformly random Clifford gate, $C\ket{0^n}$ will be a uniformly random stabilizer state. We introduce the concept of $k$-neighbor of a stabilizer state $\ket{\psi}$, which is the set of stabilizer states $\ket{\phi}$ such that $|\braket{\phi|\psi}|=2^{-k/2}$. It is known that every $n$-qubit stabilizer state has the same number of $k$-neighbors~\cite{garcia2014geometry}, which is denoted by $\cL_n(k)$. We know that for any stabilizer state $C\ket{0^n}$, the inner product $|\braket{x|C|0^n}|$ is $2^{-k/2}$ for $2^k$ choices of $x$, where $k$ is an integer, and for the remaining values of $x$ the inner product is 0. It is easy to see that $\beta_C = 2^{n-k}$ in this case.

For each $\ket{x}$, there are $\cL_n(k)$ $k$-neighbors, and each such state is in turn a $k$-neighbor of $2^k$ computational states. This means that there are $2^{n-k}\cL_n(k)$ stabilizer states that leads to $\beta_C = 2^{n-k}$. Let $\cN(n)$ be the total number of $n$-qubit stabilizer states. Then
\begin{align}
    \E_C \beta_C =& \sum_k 2^{n-k} \frac{2^{n-k}\cL_n(k)}{\cN(n)} = \sum_t 4^t \frac{\cL_n(n-t)}{\cN(n)}, \\
    \E_c \beta_C^2 =& \sum_k 4^{n-k} \frac{2^{n-k}\cL_n(k)}{\cN(n)} = \sum_t 8^t \frac{\cL_n(n-t)}{\cN(n)}.
\end{align}
It is known~\cite[Thm.~16]{garcia2014geometry} that in the large $n$ limit,
\begin{equation}
    c_1 2^{-t(t+5)/2} \le \frac{\cL_n(n-t)}{\cN(n)} \le c_2 2^{-t(t+3)/2}
\end{equation}
for some constants $c_1$ and $c_2$. So $\E_c \beta_C$ and $\E_c \beta_C^2$ are both $\Theta(1)$. This means that the variance of $\beta_C$ is $\Theta(1)$ when $C$ is a uniformly random Clifford circuit.

\subsection{Noisy Case}
\label{app:sed}
In this section, we investigate the single exponential decay behavior in the noisy case. In particular, we give estimates for the coefficients $A$ and $B$ in \Cref{thm:urb}, showing that $A=0$ and $B=O(1)$.

Following the universal randomized benchmarking (URB) framework~\cite{Chen2022}, we can write the Clifford XEB experiment results as
$$
    q_R(m)=-1+2^n \tr[|0^n\rangle\langle 0^n|(\Lambda_R^m(\mathcal{D}_M)(\rho_0))],$$
where the \emph{noisy twirling map} $\Lambda_R$ is defined as a map on quantum channels:
$$\Lambda_R: \mathcal{C}\mapsto \mathbb{E}_{g\sim u}[g^\dagger\circ \mathcal{C} \circ\phi(g)].$$

Now, using the notations in the proof of \cite[Theorem 8]{Chen2022}, $\Lambda_R$ is regarded as a perturbed version of $\Lambda^*_R$ and has a gapped spectrum leading to the single exponential decay in an appropriate parameter range. Specifically, define
$$X_1=\frac{\langle \cdot,\mathcal{D}\rangle_{SO}}{\langle \mathcal{D},\mathcal{D}\rangle_{SO}}\mathcal{D}+\frac{\langle\cdot, \mathcal{I}-\mathcal{D}\rangle_{SO} }{\langle\mathcal{I}-\mathcal{D}, \mathcal{I}-\mathcal{D}\rangle_{SO}}(\mathcal{I}-\mathcal{D}), X_2=\mathcal{I}-X_1,$$
One can verify that $X_i^2=X_i$ for $i=1,2$ and $$X_i \Lambda^*_R X_j=0$$ when $i\neq j$. The matrix perturbation result states that when $|||\Lambda^*_R-\Lambda_{\mathrm{Haar}}|||_\diamond <\gamma$, $|||\Lambda^*_R-\Lambda_R|||_\diamond\leq \delta$ and $\delta <\frac{1-\gamma}{11}$, there exists twirling maps $L,R$ such that \begin{itemize}
    \item $L=R^{-1}, ||| L|||_\diamond\leq 4, ||| R|||_\diamond\leq 4$;
    \item $X_i R\Lambda_R L X_j=0$ when $i\neq j$;
    \item $||| X_2R\Lambda_R LX_2|||_{\diamond}\leq \gamma+6\delta$;
    \item $||| X_1R\Lambda_R LX_1-X_1|||_{\diamond}\leq 2\delta$.
\end{itemize}

Let $A'_i=X_iR\Lambda_R LX_i$. It can be further proven that $A'_1$ has one eigenvalue $1$ and another real eigenvalue lying in $[1-2\delta, 1]$. Then $$\Lambda_R^m = L(A'_1)^mR+L(A'_2)^mR$$
and
\begin{align}
    q_R(m) &= -1+2^n \tr[|0^n\rangle\langle 0^n|(\Lambda_R^m(\mathcal{D}_M)(\rho_0))]\\
    &=-1+2^n \left[ \tr[|0^n\rangle\langle 0^n|(L(A'_1)^mR(\mathcal{D}_M)(\rho_0))]%+\tr[|0^n\rangle\langle 0^n|(\Lambda_R^m(\mathcal{D}_M)(\rho_0))]\\
    + \tr[|0^n\rangle\langle 0^n|(L(A'_2)^mR(\mathcal{D}_M)(\rho_0))]\right].
\end{align}
Since $$|||A'_2|||_\diamond\leq \gamma+6\delta,$$ the second term vanishes quickly, and the single exponential decay is given by the first term.

We assumed $p<1$, so $A$ is well-defined and is the contribution corresponding to the eigen-channel of $\Lambda_R$ with eigenvalue 1. It is easy to see that this eigen-channel is the depolarization channel $\mathcal{D}(\rho)=\frac{I}{D}$:
$$\Lambda_R(\mathcal{D})(\rho) = \int d\mu \omega^\dagger(g)\circ \mathcal{D}( \phi(g)(\rho))=\int d\mu \omega^\dagger(g)(\frac{I}{D})=\frac{I}{D}$$
since the inverting maps $\omega^\dagger(g)$ are all unital. We then have
$$A=2^n\tr[|0\rangle\langle 0|^{\otimes n} \mathcal{D}(\rho)]-1=0.$$
We thereby obtain the first conclusion $A=0$. Note that this holds in general when $\Lambda_R$ only has only one eigenvalue with modulus $1$: This ensures that $\Lambda_R^m(\mathcal{D}_M)\rightarrow \mathcal{D}$ when $m\rightarrow \infty$. This criterion does not pose any requirement on the state preparation or the measurement.

Both $\Lambda_R$ and $\Lambda^*_R$ are twirling maps; they map channels to channels. Consequently, they map differences of channels to differences of channels. It is thus more convenient to restrict them on the space spanned by the differences of channels. In this case one can define $X_1'$ to be
$$X_1' = \frac{\langle\cdot, \mathcal{I}-\mathcal{D}\rangle_{SO} }{\langle\mathcal{I}-\mathcal{D}, \mathcal{I}-\mathcal{D}\rangle_{SO}}(\mathcal{I}-\mathcal{D}).$$
The perturbation result tells us that 
\begin{align*}
    |B|&=|2^n\tr[|0\rangle\langle 0|^{\otimes n} L\cdot X_1'\cdot R(\mathcal{D}_M)(\rho)]|\\
&\leq 2^n \cdot \Vert L\cdot X_1'(R(\mathcal{D}_M))\Vert_{\diamond}\\
&\leq 2^n \Vert\frac{\langle R(\mathcal{D}_M), \mathcal{I}-\mathcal{D}\rangle_{SO}\cdot L(\mathcal{I}-\mathcal{D}) }{\langle\mathcal{I}-\mathcal{D}, \mathcal{I}-\mathcal{D}\rangle_{SO}}\Vert_\diamond\\
&= \frac{2^n}{\langle\mathcal{I}-\mathcal{D}, \mathcal{I}-\mathcal{D}\rangle_{SO}}\Vert L(\mathcal{I}-\mathcal{D})\Vert_\diamond |\langle \mathcal{D}_M, R^\dag(\mathcal{I}-\mathcal{D})\rangle_{SO}| .
\end{align*}
For sake of simplicity we denote $\mathcal{N}:=R^\dag(\mathcal{I}-\mathcal{D})$. We now bound the quantity $|\langle \mathcal{D}_M, \mathcal{N}\rangle_{SO}|$. 

Let $\mathcal{P}:\rho\rightarrow \sum_i \langle i|\rho|i\rangle |i\rangle\langle i|$ be the dephasing channel. One can check that it is self-adjoint with respect to the super operator inner product. Therefore
$$\langle \mathcal{D}_M, \mathcal{N}\rangle_{SO}=\langle \mathcal{P}\circ\mathcal{D}_M, \mathcal{N}\rangle_{SO}=\langle \mathcal{D}_M, \mathcal{P}\circ \mathcal{N}\rangle_{SO}.$$
As $\Vert \mathcal{P}\circ\mathcal{N}\Vert_\diamond\leq \Vert\mathcal{N}\Vert_\diamond$ due to data processing inequality, we can assume without generality that $\mathcal{N}$ is a quantum-classical superoperator: that is, there exists a set of operators $\{N_i\}_{i}$ such that
$$\mathcal{N}:\rho\rightarrow \sum_i\tr[N_i\rho]|i\rangle\langle i|.$$
Then 
\begin{align}
    \langle \mathcal{D}_M,\mathcal{N}\rangle_{SO}&=\sum_{P_i\in P^{\otimes n}}\tr[\mathcal{D}(P_i)\mathcal{N}(P_i)]\\
    &=\sum_{P_i\in P^{\otimes n}}\sum_{j\in\{0,1\}^n}\tr[M_jP_i][N_jP_i]\\
    &=2^n\sum_j\tr[M_jN_j]\\
    &=4^n \sum_j\frac{\tr[M_j]}{2^n}\cdot \tr[\bar{M_j}N_j],
\end{align}
where $\bar{M_i}=\frac{M_i}{\tr[M_i]}$. Since $\{M_i\}_i$ is a POVM, we have $M_i\succcurlyeq 0$ and $\sum_i M_i=I$, and therefore $\bar{M}_i$ is a quantum state. We argue that $|\langle \mathcal{D}_M,\mathcal{N}\rangle_{SO}|\leq 4^n \Vert\mathcal{N}\Vert_\diamond$. Assume otherwise, then there exists $i$ such that
$$|\tr[\bar{M_i} N_i]|>\Vert \mathcal{N}\Vert_\diamond.$$
However this is not possible as 
$$\Vert\mathcal{N}\Vert_\diamond \geq |\mathcal{N}(\bar{M_i})|_{\mathrm{tr}}=\sum_j |\tr[\bar{M_i}N_j]|\geq |\tr[\bar{M_i}N_i]|.$$
We then calculate $\langle \mathcal{I}-\mathcal{D},\mathcal{I}-\mathcal{D}\rangle_{SO}$. By linearity we have
\begin{align}
    &\langle \mathcal{I}-\mathcal{D},\mathcal{I}-\mathcal{D}\rangle_{SO}\\
    =&\langle \mathcal{I},\mathcal{I}\rangle_{SO}-2\langle \mathcal{D},\mathcal{I}\rangle_{SO} + \langle \mathcal{D},\mathcal{D}\rangle_{SO}.
\end{align}
For any quantum channel $\mathcal{C}$,
\begin{align}
    \langle \mathcal{D},\mathcal{C}\rangle_{SO}&=\sum_{P_i\in P^{\otimes n}} \tr[ \mathcal{D}(P_i)\mathcal{C}(P_i)]\\
    &=\sum_{P_i\in P^{\otimes n}} \tr[ 2^{-n}\cdot I\tr[P_i]\mathcal{C}(P_i)]\\
    &=\tr[2^{-n}\cdot 2^n \cdot \mathcal{C}(I)]=2^n.
\end{align}
On the other hand,
\begin{align}
    \langle \mathcal{I},\mathcal{I}\rangle_{SO}&=\sum_{P_i\in P^{\otimes n}} \tr[ \mathcal{I}(P_i)\mathcal{I}(P_i)]\\
    &=\sum_{P_i\in P^{\otimes n}} \tr[I]=4^n\cdot 2^n=8^n.
\end{align}
Therefore $$\langle \mathcal{I}-\mathcal{D},\mathcal{I}-\mathcal{D}\rangle_{SO}=8^n-2^n.$$

Plugging everything in we have
$$B\leq \Vert \mathcal{N}\Vert_\diamond\cdot \Vert L(\mathcal{I}-\mathcal{D})\Vert_\diamond\frac{4^n}{4^n-1}.$$
Since $\mathcal{N}=R^\dag(\mathcal{I}-\mathcal{D})$, we have 
$$|B|\leq |||L|||_\diamond \cdot|||R|||_\diamond\cdot \Vert \mathcal{I}-\mathcal{D}\Vert_\diamond^2\frac{4^n}{4^n-1}\leq 16\cdot 4\cdot 2\leq 128,$$
proving that $B=O(1)$.

\section{Circuit Lower Bounds for $\gamma$-Approximate Twirl}
\label{app:lb}
In this section we prove lower bounds on the circuit required to implement $\gamma$-approximate twirls. We state our result in the form of a theorem:
\begin{thm}
Suppose that the twirling map corresponding to a measure $\mu$ on $SU(2^n)$ is a $\gamma$-approximate twirl. Suppose we implement unitaries $g \in SU(2^n)$ sampled from $\mu$ by compiling them using single- and two-qubit gates, and let $s(g)$ and $t(g)$ be the number of gates and the depth for the circuit that implements $g$. Then
\begin{equation}
    \E_{g\sim \mu} s(g) = \Omega( (1-\gamma)n),\E_{g\sim \mu} t(g) = \Omega( (1-\gamma)\log n).
\end{equation}
\end{thm}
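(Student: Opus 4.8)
The plan is to reduce both bounds to the behaviour of a single very simple test channel under the twirl, namely the conjugation channel $\omega(Z_i):\rho\mapsto Z_i\rho Z_i$ for a Pauli $Z$ on an arbitrary qubit $i$. Since $\omega$ is a homomorphism, the per-sample twirl sends this channel to $\omega(g^\dagger Z_i g)$, so $\Lambda(\mu)(\omega(Z_i)) = \E_{g\sim\mu}\,\omega(g^\dagger Z_i g)$, whereas the Haar twirl $\Lambda(\mu_H)(\omega(Z_i))$ is the (generalized) depolarizing channel, whose Choi state is $\frac{\id}{D^2}+O(D^{-2})$. I would lower bound $|||\Lambda(\mu)-\Lambda(\mu_H)|||_\diamond$ by evaluating the difference superoperator $\Delta:=\Lambda(\mu)(\omega(Z_i))-\Lambda(\mu_H)(\omega(Z_i))$ on the normalized maximally entangled state $\ket\Phi=D^{-1/2}\sum_x\ket{x}\ket{x}$, using $|||\Lambda(\mu)-\Lambda(\mu_H)|||_\diamond \ge \norm{\Delta}_\diamond/\norm{\omega(Z_i)}_\diamond = \norm{\Delta}_\diamond \ge \tr[O\,(\Delta\ot\id)(\state\Phi)]$ for any Hermitian witness $O$ with $\norm{O}\le 1$ (here $\norm{\omega(Z_i)}_\diamond=1$ since $\omega(Z_i)$ is unitary). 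Writing $\ket{\Phi_W}:=(W\ot\id)\ket\Phi$, the Choi state of $\Lambda(\mu)(\omega(Z_i))$ is the mixture $\E_g\,\state{\Phi_{g^\dagger Z_i g}}$.

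For the size bound, take the witness $O=\state{\Phi_{Z_i}}$. For every sample in which qubit $i$ is untouched by the circuit we have $g^\dagger Z_i g = Z_i$, so that sample contributes overlap $1$; hence the $\mu$-value of the witness is at least $\Pr_{g\sim\mu}[\text{qubit } i \text{ untouched}]$. A standard two-design computation gives $\int dg\,\abs{\tr(Z_i g^\dagger Z_i g)}^2 = D^2/(D^2-1)$, so the Haar value of the same witness is $\frac1{D^2}\cdot\frac{D^2}{D^2-1}=\frac1{D^2-1}$. Therefore $\gamma \ge \Pr_{g}[\text{qubit } i \text{ untouched}] - \frac1{D^2-1}$, i.e.\ each fixed qubit is untouched with probability at most $\gamma+o(1)$. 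Since a circuit of size $s(g)$ touches at most $2s(g)$ qubits, the expected number of untouched qubits is both $\ge n-2\E_g s(g)$ and equal to $\sum_i\Pr_g[\text{qubit }i\text{ untouched}]\le n(\gamma+o(1))$; rearranging gives $\E_g s(g)\ge \frac12 n(1-\gamma)-o(n)=\Omega((1-\gamma)n)$.

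For the depth bound I would use a light-cone argument. In a depth-$t(g)$ circuit of one- and two-qubit gates, $g^\dagger Z_i g$ is supported on the backward light cone $\mathrm{LC}_g(i)$ of qubit $i$, with $\abs{\mathrm{LC}_g(i)}\le 2^{t(g)}$. Fix a second qubit $j$ and take as witness the Bell projector $\state{\Phi_j}$ on the pair $(j,j')$ (tensored with identity elsewhere). If $j\notin \mathrm{LC}_g(i)$ then that Bell pair is untouched and the reduced Choi state on $(j,j')$ is exactly $\state{\Phi_j}$, contributing $1$; thus the $\mu$-value is at least $\Pr_g[j\notin \mathrm{LC}_g(i)]$, while the Haar value is $\tr[\state{\Phi_j}\cdot \tfrac{\id}{4}]+O(D^{-2})=\frac14+o(1)$. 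This yields $\Pr_g[j\in \mathrm{LC}_g(i)]\ge \frac34-\gamma-o(1)$ for every pair, so $\E_g\abs{\mathrm{LC}_g(i)}=\sum_j\Pr_g[j\in \mathrm{LC}_g(i)]\ge(\tfrac34-\gamma-o(1))\,n$. Finally I would convert this into a depth bound by a truncation argument: since $\abs{\mathrm{LC}_g(i)}\le\min(n,2^{t(g)})$, splitting on the event $t(g)\ge T$ gives $\E_g\abs{\mathrm{LC}_g(i)}\le n\Pr[t(g)\ge T]+2^T$, and choosing $T\approx\log_2(\tfrac12\E_g\abs{\mathrm{LC}_g(i)})$ forces $\Pr[t(g)\ge T]=\Omega(\tfrac34-\gamma)$, whence $\E_g t(g)\ge T\,\Pr[t(g)\ge T]=\Omega((\tfrac34-\gamma)\log n)$.

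The main obstacle is the depth bound, for two reasons. First, one cannot pass directly from a lower bound on the expected light-cone size $\E_g\abs{\mathrm{LC}_g(i)}$ to a lower bound on $\E_g t(g)$, because $\abs{\mathrm{LC}_g(i)}\le 2^{t(g)}$ only controls $\E_g 2^{t(g)}$ and Jensen runs the wrong way; the truncation step above is precisely what salvages a bound on $\E_g t(g)$ itself, and it must be executed carefully. Second, the depth witness loses a constant: the Bell projector has overlap $\frac14$ with the maximally mixed reduced state, so the clean threshold $\Pr[\text{untouched}]\le\gamma$ of the size bound degrades to $\Pr[j\in \mathrm{LC}_g(i)]\ge\frac34-\gamma$, and recovering the stated $\Omega((1-\gamma)\log n)$ dependence all the way to $\gamma\to1$ requires either optimizing this constant (e.g.\ by using a sharper witness, or a larger distinguishing region than a single qubit pair) or restricting to the regime where $\gamma$ is bounded away from $1$ by a constant. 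The remaining ingredients—$\norm{\omega(Z_i)}_\diamond=1$, the two-design integral, and the identification of $\Lambda(\mu_H)(\omega(Z_i))$ with a depolarizing channel—are routine.
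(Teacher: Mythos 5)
Your overall strategy is the same as the paper's: both proofs probe the twirl with the conjugation channel $\rho\mapsto Z_i\rho Z_i$, observe that the Haar twirl sends it (up to $O(4^{-n})$) to the completely depolarizing channel, and then convert "the conjugated Pauli must spread out" into size and depth lower bounds via light-cone counting. Your size bound is correct and is a mild variant of the paper's: you bound, for each qubit $i$, the probability that the circuit leaves qubit $i$ untouched (getting $\le\gamma+o(1)$, since your global witness $\state{\Phi_{Z_i}}$ has exponentially small Haar value $1/(4^n-1)$), and then count touched qubits against $2s(g)$; the paper instead fixes the single channel $\omega(Z_1)$, shows each qubit $j$ lies in the support of $g^\dagger Z_1 g$ with probability at least $(1-\gamma)/2-O(4^{-n})$, and uses $s(g)\ge f(g)-1$ where $f(g)$ is that support size. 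Both give $\Omega((1-\gamma)n)$.

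The genuine gap is in your depth bound, and you have correctly diagnosed it but not repaired it: your one-sided witness argument with the Bell projector on $(j,j')$ only yields $\Pr[j\in\mathrm{LC}_g(i)]\ge \tfrac34-\gamma-o(1)$, so you prove $\E_g t(g)=\Omega((\tfrac34-\gamma)\log n)$, which is vacuous for $\gamma\ge 3/4$ and is \emph{not} $\Omega((1-\gamma)\log n)$ near $\gamma=3/4$; the theorem as stated is therefore not established. The repair is available and is essentially what the paper does: replace the one-sided overlap by the full trace distance of the reduced states. Since the difference of the two reduced Choi states is traceless and Hermitian, its trace norm equals \emph{twice} its maximal overlap with a projector, so $\gamma\ge 2\bigl(\Pr[j\notin\mathrm{LC}_g(i)]-\tfrac14\bigr)-o(1)$, giving $\Pr[j\in\mathrm{LC}_g(i)]\ge \tfrac34-\tfrac\gamma2-o(1)\ge\max\bigl(\tfrac14,\tfrac{1-\gamma}2\bigr)-o(1)$, which is $\Omega(1-\gamma)$ on the whole range $\gamma\le 1$. (The paper's version of this step evaluates the channels on $\state{0^n}$, traces down to one qubit, and bounds $\norm{p_j\rho_j+(1-p_j)\state{0}-\tfrac{I}{2}}_1\ge 1-2p_j$, which likewise retains the two-sided factor.) Separately, your truncation argument for converting $\E_g\abs{\mathrm{LC}_g(i)}\ge cn$ into $\E_g t(g)=\Omega(c\log n)$ is valid, though the paper's route is cleaner and avoids truncation entirely: from $f(g)\le 2^{t(g)}$ one gets $t(g)\ge\log_2 f(g)\ge\frac{f(g)-1}{n-1}\log_2 n$ by concavity of the logarithm, and this chord bound is linear in $f(g)$ so the expectation passes through directly.
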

\begin{proof}
This proof is adapted from~\cite{cleve2016near}. We set the channel $\cN$ so that $\cN(\rho) = Z_1\rho Z_1$ where $Z_1$ is the Pauli $Z$ gate applied to the first qubit. Then we know that the twirling map $\Lambda_{\haar}$ maps $\cN$ to the following channel:
\begin{equation}
    \Lambda_{\haar} (\cN) : \rho \mapsto \frac{1}{4^n-1}\sum_{p \in \{0,1,2,3\}^n,p\not= 0^n} \sigma_p \rho \sigma_p = \frac{4^n}{4^n-1} \times \frac{I}{2^n}\tr[\rho] - \frac{1}{4^n-1} \rho.
\end{equation}
Let $\cD$ be the completely depolarizing channel on $n$ qubits
\begin{equation}
    \cD(\rho) = \frac{I}{2^n}\tr[\rho],
\end{equation}
then one can see that
\begin{equation}
    \norm{\cD - \Lambda_{\haar} (\cN) }_\diamond = O(4^{-n}).
\end{equation}

By definition
\begin{equation}
    \Lambda_\mu (\cN) : \rho \to \E_{g\sim \mu} g^\dagger Z_1 g \rho g^\dagger Z_1 g.
\end{equation}
Let $f(g)$ be the size of the support of $g^\dagger Z_1 g$, i.e., the number of qubits that $g^\dagger Z_1 g$ acts nontrivially on. For any $1\le j \le n$, define $p_j$ as the probability that qubit $j$ is in the support of $g^\dagger Z_1 g$ for $g$ sampled according to $\mu$. Assuming $p_j \le 1/2$, we have
\begin{align}
    \gamma \ge& \norm{\Lambda_\mu(\cN) - \Lambda_{\haar}(\cN)}_\diamond \nonumber\\
    \ge& \norm{\Lambda_\mu(\cN) - \cD}_\diamond - O(4^{-n}) \nonumber\\
    \ge& \norm{(\Lambda_\mu(\cN)-\cD)(\state{0^n})}_1 - O(4^{-n}) \nonumber\\
    \ge& \norm{\tr_{\bar j}[(\Lambda_\mu(\cN)-\cD)(\state{0^n})]}_1 - O(4^{-n}) \nonumber\\
    =& \norm{p_j \rho_j + (1-p_j)\state{0} - \frac{I}{2}}_1 - O(4^{-n}) \label{eq:step}\\
    \ge& \norm{p_j \state{1} + (1-p_j) \state{0} - \frac{I}{2}}_1- O(4^{-n}) \nonumber \\
    =& 1-2p_j- O(4^{-n}),
\end{align}
where $\tr_{\bar j}[\cdot]$ means tracing over all qubits other than qubit $j$, and in~\Cref{eq:step} $\rho_j$ refers to the state of qubit $j$ conditioned that some unitary is applied to $j$ in channel $\Lambda_\mu(\cN)$. This implies
\begin{equation}
    p_j \ge \frac{1-\gamma}{2}- O(4^{-n}). \label{eq:pj-bound}
\end{equation}
Obviously~\Cref{eq:pj-bound} is also satisfied when $p_j > 1/2$, so it always holds.
Then
\begin{equation}
    \E_{g\sim \mu} f(g) = \sum_j p_j \ge \frac{1-\gamma}{2}n - O(n4^{-n}).
\end{equation}
Note that for any $g$,
\begin{equation}
    s(g) \ge f(g)-1, t(g) \ge \log f(g) \ge \frac{f(g)-1}{n-1} \log n,
\end{equation}
so we have
\begin{equation}
    \E_{g\sim \mu} s(g) = \Omega( (1-\gamma)n),\E_{g\sim \mu} t(g) = \Omega( (1-\gamma)\log n).
\end{equation}
\end{proof}

\section{Spectral Norm Proofs}
\subsection{Proof of \Cref{thm:urb_spec_mu}}
\label{sec:spec_norm}
In this section we give a proof of~\Cref{thm:urb_spec_mu}. Specifically, following the proof of Theorem 16 in ~\cite{Chen2022}, It suffices to prove the following.

\begin{prp}
Let $\Lambda^*=\int d\mu \omega^\dagger(g)\circ\cdot \circ \omega(g)$ be the ideal twirling map associated to the probabilistic measure $\mu$, and $\Lambda=\int d\mu \omega^\dagger(g)\circ\cdot \circ \phi(g)$ be its noisy implementation. Moreover, suppose that $\phi$ maps to mixtures of unitaries, that is, for each $g$ there exists a probabilistic measure $\nu_g$ on $SU(d)$ such that
$$\phi(g)=\int d\nu_g u.$$
Then
$$|||\Lambda^*-\Lambda|||_2\leq \E_{g\sim \mu} \E_{u\sim \nu_g} \norm{\omega(g) - u}_\diamond.$$
\end{prp}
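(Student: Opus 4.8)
The plan is to compute the spectral norm $|||\Lambda^*-\Lambda|||_2$ directly as the operator norm of $\Lambda^*-\Lambda$ acting on the space of superoperators equipped with $\langle\cdot,\cdot\rangle_{SO}$, and to pass to the Pauli transfer matrix (PTM) picture, where the super-operator inner product is (up to the overall factor $2^n$) the Frobenius inner product of the matrices representing superoperators in the normalized Pauli basis. Writing $d=2^n$ and letting $M_{\cC}$ denote the PTM of $\cC$, we have $\norm{\cC}_{SO}=\sqrt d\,\norm{M_{\cC}}_F$, so the overall scaling cancels in the ratio defining $|||\cdot|||_2$. The first step is to expand
$$(\Lambda^*-\Lambda)(\cC)=\int d\mu(g)\int d\nu_g(u)\;\omega(g^\dagger)\circ\cC\circ\big(\omega(g)-\omega(u)\big),$$
using $\phi(g)=\int d\nu_g\,\omega(u)$ and linearity, and then to move the averages outside the norm by convexity (Jensen).

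Next I would bound each term $\norm{\omega(g^\dagger)\circ\cC\circ(\omega(g)-\omega(u))}_{SO}$. In the PTM picture this equals $\sqrt d\,\norm{M_{\omega(g^\dagger)}M_{\cC}M_{\omega(g)-\omega(u)}}_F$, and submultiplicativity of the Frobenius norm against the operator norm gives the upper bound $\sqrt d\,\norm{M_{\omega(g^\dagger)}}_{op}\norm{M_{\cC}}_F\norm{M_{\omega(g)-\omega(u)}}_{op}$. Since $\omega(g^\dagger)$ is a unitary channel its PTM is orthogonal, so $\norm{M_{\omega(g^\dagger)}}_{op}=1$; and the operator norm of a PTM is exactly the induced Hilbert--Schmidt norm $\norm{\cdot}_{2\to2}$. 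Dividing by $\norm{\cC}_{SO}=\sqrt d\,\norm{M_{\cC}}_F$ and taking the supremum over $\cC$ yields
$$|||\Lambda^*-\Lambda|||_2\le\E_{g\sim\mu}\E_{u\sim\nu_g}\norm{\omega(g)-\omega(u)}_{2\to2}.$$

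The crux is the final lemma: for two unitary channels, $\norm{\omega(g)-\omega(u)}_{2\to2}\le\norm{\omega(g)-\omega(u)}_\diamond$, with no dimension factor. This is exactly what distinguishes the mixture-of-unitaries case from the general case of \Cref{thm:urb_spec}, whose proof loses a $2^{n/2}$. I would prove it by setting $U=u^\dagger g$ and using the invariance of both norms under left composition by a unitary channel, via $\omega(g)-\omega(u)=\omega(u)\circ(\omega(U)-\id)$, to reduce to comparing $\norm{\omega(U)-\id}_{2\to2}$ and $\norm{\omega(U)-\id}_\diamond$. In the eigenbasis of $U$ with eigenvalues $e^{i\theta_k}$, the superoperator $\omega(U)$ is normal with eigenvalues $e^{i(\theta_k-\theta_l)}$, so $\norm{\omega(U)-\id}_{2\to2}=\max_{k,l}\abs{e^{i\theta_k}-e^{i\theta_l}}$, the Euclidean diameter of the spectrum of $U$ on the unit circle. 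On the other hand, the standard diamond-norm formula gives $\norm{\omega(U)-\id}_\diamond=2\sqrt{1-\nu^2}$, where $\nu$ is the distance from the origin to the convex hull of $\{e^{i\theta_k}\}$. A short geometric argument, namely that the midpoint of the chord realizing the diameter lies in the convex hull and has distance $\sqrt{1-(\mathrm{diam}/2)^2}$ from the origin, shows $\nu\le\sqrt{1-(\mathrm{diam}/2)^2}$, hence $2\sqrt{1-\nu^2}\ge\mathrm{diam}$, which is the desired inequality. Chaining this with the previous display gives the proposition.

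I expect the main obstacle to be this last lemma, since it is where the improvement over the general bound comes from: one must exploit the unitary structure of each $u$ appearing in the mixture to avoid the dimension-dependent loss incurred by a naive Choi or Hilbert--Schmidt comparison, and some care is needed to justify the eigenvalue characterization of the $2\to2$ norm together with the lower bound on the diamond norm (equivalently, to construct an explicit maximally entangled test input realizing the diameter, if one prefers a self-contained argument over quoting the diamond-norm formula).
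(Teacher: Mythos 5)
Your proof is correct, and its first half --- expanding $\phi(g)$ as a mixture of unitary channels, pulling both averages out of the norm by convexity, and bounding each term via $\norm{ABC}_F\le\norm{A}_{op}\norm{B}_F\norm{C}_{op}$ together with orthogonality of unitary-channel PTMs --- is exactly the reduction the paper performs (compressed there into the single line ``it can be verified that''). Where you genuinely diverge is the justification of the key lemma comparing norms of a difference of two unitary channels. The paper reduces, as you do, to $\id$ versus $u=U\cdot U^\dagger$ with $U$ diagonal, but then \emph{asserts} that the induced trace norm and the $2\to2$ norm are both equal to the spectral diameter $2\max_{i,j}\abs{\sin\tfrac{\lambda_i-\lambda_j}{2}}$, and finishes with $\norm{\cdot}_{\mathrm{tr}}\le\norm{\cdot}_\diamond$. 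You instead invoke the diamond-norm formula $2\sqrt{1-\nu^2}$, with $\nu$ the distance from the origin to the convex hull of the spectrum, and prove $\mathrm{diam}\le 2\sqrt{1-\nu^2}$ by the chord-midpoint argument. Your route is not only valid but more careful than the paper's: the paper's claimed \emph{equality} for the trace norm is false in dimension $d\ge 3$. For instance, if the eigenvalues of $U$ are the three cube roots of unity, the convex hull contains the origin, so the induced trace norm (which for unitary-channel differences coincides with the diamond norm, here equal to $2$) strictly exceeds the diameter $\sqrt{3}$, which is the $2\to2$ norm. Only the inequality ``trace norm $\ge$ diameter'' holds in general, and that inequality is precisely what your midpoint argument establishes. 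So both proofs share the same skeleton, and the dimension-free comparison of superoperator norms for unitary channels is the crux in both; but your geometric argument supplies the correct proof of the step that the paper's explicit ``computation'' gets wrong (harmlessly for the theorem, since only the inequality is used).
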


\begin{proof}
It can be verified that $$|||\Lambda^*-\Lambda|||_2\leq \int d\mu \norm{\omega(g)-\phi(g)}_2\leq \int d\mu \int d\nu_g \norm{\omega(g)-u}_2.$$
It then suffices to prove that $\norm{\omega(g)-u}_2\leq \norm{\omega(g)-u}_\mathrm{tr}$ for unitary channels $\omega(g)$ and $u$ since it is known that $\norm{\omega(g)-u}_\mathrm{tr}\leq \norm{\omega(g)-u}_\diamond$ .

Assume without loss of generality that $\omega(g)=\mathrm{Id}$ and $u=U\cdot U^{^\dag}$. Assume further that $U$ is diagonalized over the computational basis: $U=\sum_{j}e^{i\lambda_j} |j\rangle\langle j|$. Then the trace norm and the $2$-norm can then be computed explicitly as
$$\norm{\mathrm{Id}-u}_\mathrm{tr}=\norm{\mathrm{Id}-u}_\mathrm{2}=2\max_{i,j}\sin\frac{\lambda_i-\lambda_j}{2}.$$
\end{proof}

\subsection{Spectrally Gapped Schemes}
\label{sec:spec_gapped}

In this section we provide a proof to \Cref{thm:gapped}. Our main tool is random walks on finite groups~\cite{saloff2004random}.
We first make the following definition:
\begin{dfn}
Let $G$ be a finite group. We say a measure $\mu$ over $G$ is \emph{scrambling over $G$} if $A:=\mathrm{Supp}(\mu)$ generates $G$, and $A$ is not contained in a coset of a proper normal subgroup of $G$.
\end{dfn}
\noindent This is interesting because from~\cite{saloff2004random}, we can show that this measure converges to the uniform distribution under iteration, where iteration is defined as 
\begin{equation}
(\mu_1 * \mu_2)(g) := \int_{g_2 \sim \mu_2} dg_2 \mu_1(g g_2^{-1}) \mu_2(g_2).
\end{equation}
We state this fact as a proposition:
\begin{prp}[Proposition 2.3 and Theorem 2.1 of~\cite{saloff2004random}]
\label{prp:scrambling}
   Let $G$ be a finite group and let $\mu$ be a measure that is scrambling over $G$. Then,  $\mu$ converges to the uniform distribution $\mu_U$ over $G$ under iteration.
\end{prp}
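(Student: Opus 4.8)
The plan is to prove the convergence $\mu^{*k}\to\mu_U$ by nonabelian Fourier analysis on the finite group $G$, reducing the statement to a spectral-gap estimate for the Fourier transform of $\mu$ at each nontrivial irreducible representation. (One could equally phrase this as a Markov-chain convergence theorem for the right random walk driven by $\mu$, whose doubly stochastic transition matrix has uniform stationary distribution; the representation-theoretic route makes the role of the two halves of the scrambling hypothesis most transparent, so I would take that one.)

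First I would set up the transform: for each irreducible unitary representation $\rho\colon G \to U(d_\rho)$ put $\hat\mu(\rho)=\sum_{g\in G}\mu(g)\rho(g)$, and recall that convolution becomes matrix multiplication, so $\widehat{\mu^{*k}}(\rho)=\hat\mu(\rho)^k$. By the Fourier inversion formula, $\mu^{*k}(g)=\frac{1}{|G|}\sum_\rho d_\rho\,\tr[\hat\mu(\rho)^k\,\rho(g^{-1})]$. The trivial representation always gives $\hat\mu(\mathrm{triv})=1$, contributing the constant $1/|G|$, which is exactly $\mu_U$. Hence convergence to $\mu_U$ is equivalent to showing $\hat\mu(\rho)^k\to 0$ for every nontrivial irreducible $\rho$, i.e.\ that the spectral radius of $\hat\mu(\rho)$ is strictly less than $1$.

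The core of the proof is establishing this gap. Since each $\rho(g)$ is unitary and $\mu$ is a probability measure, $\hat\mu(\rho)$ is a convex combination of unitaries, so all its eigenvalues lie in the closed unit disk; the work is to exclude eigenvalues of modulus exactly $1$. Suppose $\hat\mu(\rho)v=\lambda v$ with $\|v\|=1$ and $|\lambda|=1$. Then $\lambda=\sum_{g\in A}\mu(g)\langle v,\rho(g)v\rangle$ is a convex combination (weights summing to $1$) of numbers in the closed unit disk whose value has modulus $1$, which forces $\langle v,\rho(g)v\rangle=\lambda$ for every $g$ in the support $A$; the equality case of Cauchy--Schwarz then gives $\rho(g)v=\lambda v$ for all $g\in A$. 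The set of group elements sending $\mathbb{C}v$ into itself is a subgroup containing $A$, and $A$ generates $G$, so $\mathbb{C}v$ is $\rho(G)$-invariant. Irreducibility forces $d_\rho=1$, so $\rho$ is a character with $\rho(g)=\lambda$ for all $g\in A$; equivalently $A\subseteq\rho^{-1}(\lambda)$, a coset of the proper normal subgroup $\ker\rho$. This contradicts the scrambling hypothesis, so no modulus-$1$ eigenvalue exists and the spectral radius is $<1$. Consequently $\hat\mu(\rho)^k\to 0$ at every nontrivial irrep, and the inversion formula yields $\mu^{*k}(g)\to 1/|G|$ uniformly in $g$.

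I expect the main obstacle to be precisely the modulus-$1$ eigenvalue analysis: justifying that a boundary-value convex combination forces $\rho(g)v=\lambda v$ on all of $A$, and then correctly converting ``common eigenline under an irreducible $\rho$'' into the two conclusions $d_\rho=1$ and $A\subseteq\rho^{-1}(\lambda)$. This is the step that consumes both halves of the scrambling assumption—generation of $G$ (promoting the eigenline from $A$ to all of $G$) and avoidance of cosets of proper normal subgroups (ruling out the surviving character)—and it is where one must keep the \emph{aperiodicity} content (coset of a proper normal subgroup) distinct from mere reducibility of the walk.
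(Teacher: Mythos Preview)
Your argument is correct: the convex-combination-on-the-boundary step forces $\rho(g)v=\lambda v$ on the support, the stabilizer-of-the-line argument together with ``$A$ generates $G$'' pushes this to a one-dimensional invariant subspace, and then the coset condition on $A$ rules out the resulting nontrivial character. The paper does not supply its own proof of this proposition---it is quoted as Proposition~2.3 and Theorem~2.1 of Saloff-Coste's survey---and the Fourier/representation-theoretic route you outline is exactly the classical proof given there, so there is nothing to contrast.
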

\noindent We can now prove the following theorem.
\begin{thm}
    Let $\mu$ be an inverse symmetric measure (that is, $\mu(g) = \mu(g^{-1})$ for all $g$) that is scrambling over the $n$-qubit Clifford group $\mathcal C(n)$. Then,
    \begin{equation}
        ||| \Lambda(\mu) - \Lambda(\mu_H) |||_2 < 1,
    \end{equation}
    where $\mu_H$ is Haar measure on $SU(2^n)$.
    \label{thm:inverse_symm_gapped}
\end{thm}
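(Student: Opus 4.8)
The plan is to recast the twirling map as an average of a unitary representation of the finite Clifford group and then feed the scrambling hypothesis into the random-walk convergence result of \Cref{prp:scrambling} to control the spectrum. Write $G=\mathcal{C}(n)$ and let $\rho$ be the conjugation representation of $G$ on the space $V$ of superoperators, $\rho(g):\mathcal{C}\mapsto\omega(g^\dagger)\circ\mathcal{C}\circ\omega(g)$; with respect to the superoperator inner product $\langle\cdot,\cdot\rangle_{SO}$ each $\rho(g)$ is unitary (in the Pauli transfer matrix picture it is a real orthogonal signed-permutation operator). Then $\Lambda(\mu)=\sum_g\mu(g)\rho(g)$, and since $\rho$ extends to a $*$-representation of the group algebra we have $\Lambda(\mu*\nu)=\Lambda(\mu)\Lambda(\nu)$, so $\Lambda(\mu)^k=\Lambda(\mu^{*k})$. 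Inverse-symmetry of $\mu$ together with $\rho(g^{-1})=\rho(g)^\dagger$ gives $\Lambda(\mu)^\dagger=\Lambda(\mu)$, so $\Lambda(\mu)$ is self-adjoint with real spectrum in $[-1,1]$ and $|||\Lambda(\mu)|||_2\le 1$.

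Next I would identify the Haar twirl with the uniform Clifford twirl. Because $\Lambda(\mu_H)$ is a second-moment (degree-$(2,2)$) object in $g$ and the Clifford group is a unitary $2$-design, $\Lambda(\mu_H)=\Lambda(\mu_U)$, where $\mu_U$ is uniform on $G$; both equal the orthogonal projection $P$ onto the common fixed space $V^G=\mathrm{span}\{\mathcal{I},\mathcal{D}\}$ (the commutant of the adjoint action, which is two-dimensional by Schur's lemma). Decomposing $V=V^G\oplus(V^G)^\perp$ orthogonally, both $\Lambda(\mu)$ and $P$ preserve the splitting (the former because it is self-adjoint and fixes $V^G$ pointwise), act as the identity on $V^G$, and $P$ vanishes on $(V^G)^\perp$. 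Hence $|||\Lambda(\mu)-\Lambda(\mu_H)|||_2=|||\Lambda(\mu)|_{(V^G)^\perp}|||_2$, and it remains to bound this restriction away from $1$.

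To finish, I would invoke convergence. Since $\mu$ is scrambling over $G$, \Cref{prp:scrambling} gives $\mu^{*k}\to\mu_U$ in total variation, and because $|||\Lambda(\nu)-\Lambda(\nu')|||_2\le\norm{\nu-\nu'}_1$ (each $\rho(g)$ being unitary) this yields $\Lambda(\mu)^k=\Lambda(\mu^{*k})\to\Lambda(\mu_U)=P$ in operator norm. Now suppose $\Lambda(\mu)$ had an eigenvalue of modulus one on $(V^G)^\perp$; by self-adjointness it equals $+1$ or $-1$, and a corresponding unit eigenvector $v$ would satisfy $\Lambda(\mu)^k v=(\pm1)^k v\not\to 0=Pv$, contradicting $\Lambda(\mu)^k\to P$. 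Therefore every eigenvalue of $\Lambda(\mu)|_{(V^G)^\perp}$ has modulus strictly below one, and since this restriction is self-adjoint its norm equals its spectral radius, giving $|||\Lambda(\mu)-\Lambda(\mu_H)|||_2<1$.

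I expect the crux to be the elimination of a $-1$ eigenvalue—the ``bipartiteness'' obstruction of the underlying random walk—which is precisely what the hypothesis that $\mathrm{Supp}(\mu)$ lies in no coset of a proper normal subgroup buys through \Cref{prp:scrambling}; the ``generates $G$'' half of scrambling only rules out spurious $+1$ eigenvalues on $(V^G)^\perp$ and could in fact be argued directly from $\langle v,\rho(g)v\rangle\le\norm{v}^2$ with equality iff $\rho(g)v=v$. A secondary point requiring care is justifying $\Lambda(\mu_H)=\Lambda(\mu_U)$ rigorously from the $2$-design property and confirming that the fixed space is exactly the two-dimensional $\mathrm{span}\{\mathcal{I},\mathcal{D}\}$, so that the restriction to $(V^G)^\perp$ is the only place where the difference of twirls is nonzero.
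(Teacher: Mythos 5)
Your proposal is correct and follows essentially the same route as the paper's proof: self-adjointness of $\Lambda(\mu)$ from inverse symmetry, the identity $\Lambda(\mu)^k=\Lambda(\mu^{*k})$, convergence $\mu^{*k}\to\mu_U$ via \Cref{prp:scrambling}, and the unitary 2-design identification $\Lambda(\mu_U)=\Lambda(\mu_H)$ to exclude a third eigenvalue of modulus one. The only divergence is at the final contradiction, where the paper invokes Weyl's perturbation inequality to pair the eigenvalues of $\Lambda(\mu^{*k})$ with those of $\Lambda(\mu_U)$, whereas you argue directly that a unit eigenvector in $(V^G)^\perp$ with eigenvalue $\pm 1$ would contradict $\Lambda(\mu)^k\to P$ in operator norm --- a marginally more elementary finish that bypasses the perturbation lemma.
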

\begin{proof}
Treating the twirling maps as linear maps on real superoperators, $\Lambda(\mu_H)$ is a projector onto the space spanned by the identity channel $\mathrm{id}$ and the completely depolarizing channel $\mathrm{dep}$. These two channels are also eigenvectors of $\Lambda(\mu)$ with eigenvalue 1. Since the singular values of $\Lambda(\mu)$ are at most 1, $|||\Lambda(\mu) - \Lambda(\mu_H)|||_2$ is therefore the third largest singular value of $\Lambda(\mu)$.

Now, by the argument in the proof of Theorem 8 in~\cite{Chen2022}, since $\mu$ is inverse symmetric, $\Lambda(\mu)$ is self-adjoint with respect to the SO inner product on real superoperators. Hence, it is diagonalizable by the spectral theorem. This implies the third largest singular value of $\Lambda(\mu)$ is third largest element of the elementwise absolute value of $\spec(\Lambda(\mu))$.

Next, it is easy to see that
\begin{equation}
    \Lambda(\mu^{* k}) = \Lambda(\mu)^k.
\end{equation}
Next, since $\Lambda(\mu)$ is diagonalizable, the eigenvalues of $\Lambda(\mu)^k$ are the $k$-th powers of the eigenvalues of $\Lambda(\mu)$. Two eigenvalues of $\Lambda(\mu)$ are 1, and every other eigenvalue has to be of magnitude at most 1. 

Assume for contradiction that there is a third eigenvalue of $\Lambda(\mu)$ with magnitude 1. Then, for any $k$, $\Lambda(\mu)^k$ always has at least three eigenvalues with magnitude 1. However, by~\Cref{prp:scrambling}, for all $g \in \mathcal C(n)$,
\begin{equation}
    \lim_{k\to \infty} \mu^{* k}(g) - \mu_U(g) = 0.
\end{equation}
Thus, for arbitrarily large $k$, the difference is arbitrarily small. Furthermore,
\begin{equation}
    ||| \Lambda(\mu^{* k}) - \Lambda(\mu_U) |||_2 \leq \max_{g \in \mathcal C(n)}  (\mu^{* k}(g) - \mu_U(g)) \vert \mathcal C(n) \vert.
\end{equation}
Thus, for arbitrarily large $k$, the $|||\cdot|||_2$ norm is also arbitrarily small. Now, since groups are closed under inverses, $\mu_U$ is symmetric under inverse. Thus, $\Lambda(\mu_U)$ is self-adjoint. Furthermore, since $\Lambda(\mu)$ is self-adjoint, $\Lambda(\mu^{* k})= \Lambda(\mu)^k$ is also self-adjoint. By Weyl's inequality on matrix perturbation, the eigenvalues of $\Lambda(\mu^{* k})$ and $\Lambda(\mu_U)$ can be paired up with a difference of at most $|||\Lambda(\mu^{* k})- \Lambda(\mu_U)|||_2$. However, this can be arbitrarily small as $k \to \infty$, and since $\mathcal C(n)$ is a unitary 2-design, two eigenvalues of  $\Lambda(\mu_U)=\Lambda(\mu_H)$ are 1 and the rest are 0. This clearly contradicts $\Lambda(\mu)^k$ having at least three eigenvalues with magnitude 1. Hence we conclude $\Lambda(\mu)$ has only two eigenvalues with magnitude 1. 

This means the third largest element of the elementwise absolute value of $\spec(\Lambda(\mu))$ is strictly less than 1, which is the desired conclusion.
\end{proof}

\noindent We can actually leverage this theorem to strongly scrambling measures $\mu$ that are not inverse symmetric. Define the inverse measure
\begin{equation}
    \mu^{-1}(g) := \mu(g^{-1}).
\end{equation}
We first show the following.
\begin{prp}
Let $\mu$ be a probabilistic measure on a finite group $G$. Let $A:=\mathrm{Supp(\mu)}$. Then the following are equivalent:
\begin{enumerate}
    \item $A$ does not lie in any (left or right) coset of a proper subgroup of $G$;
    \item $\langle A^{-1}A\rangle=G$;
    \item $\mu^{-1}*\mu$ is scrambling over $G$.
\end{enumerate}
\label{prp:weakly_scrambling}
\end{prp}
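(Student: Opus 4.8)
The plan is to establish $(1)\Leftrightarrow(2)$ and $(2)\Leftrightarrow(3)$ separately rather than chase a single cycle of implications. The first preliminary step is to pin down the support of $\mu^{-1}*\mu$. Unwinding the convolution, $(\mu^{-1}*\mu)(g)=\sum_{g_2}\mu(g_2g^{-1})\mu(g_2)$ is positive precisely when there is some $g_2$ with both $g_2\in A$ and $g_2g^{-1}\in A$; solving gives $g\in A^{-1}A$, so $\mathrm{Supp}(\mu^{-1}*\mu)=A^{-1}A$, where $A^{-1}A:=\{a^{-1}b:a,b\in A\}$. I would also record at the outset the two facts that drive everything: the identity lies in $A^{-1}A$ (take $a=b$), and $A^{-1}A$ is closed under inverses.

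For $(1)\Leftrightarrow(2)$ the key observation is a dictionary between coset containment and $A^{-1}A$. Fixing any $a_0\in A$, for a subgroup $H$ one checks $A\subseteq a_0H$ iff $A^{-1}A\subseteq H$: the forward direction writes $a=a_0h_1,\,b=a_0h_2$ and computes $a^{-1}b=h_1^{-1}h_2\in H$, while the reverse uses $b=a_0(a_0^{-1}b)$. Consequently $A$ lies in some left coset of a proper subgroup iff $\langle A^{-1}A\rangle\neq G$, which immediately yields the left-coset half of $(1)\Leftrightarrow(2)$. The only subtlety is the right cosets: if $A\subseteq Ha_0$ then $a^{-1}b\in a_0^{-1}Ha_0$, so $A^{-1}A$ lands in the conjugate subgroup $a_0^{-1}Ha_0$, which is proper exactly when $H$ is. Hence $\langle A^{-1}A\rangle=G$ also forbids $A$ from sitting in a right coset of a proper subgroup, and conversely such containment would force $\langle A^{-1}A\rangle$ into a proper conjugate subgroup. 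This closes $(1)\Leftrightarrow(2)$ and, en passant, shows $\langle A^{-1}A\rangle=G\Leftrightarrow\langle AA^{-1}\rangle=G$.

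For $(2)\Leftrightarrow(3)$, recall that $\mu^{-1}*\mu$ is scrambling iff $\langle A^{-1}A\rangle=G$ and $A^{-1}A$ is not contained in a coset of a proper normal subgroup. The first clause is literally $(2)$, giving $(3)\Rightarrow(2)$ for free. For $(2)\Rightarrow(3)$ I would exploit the identity $e\in A^{-1}A$: any coset $gN$ of a subgroup $N$ that contains the identity must equal $N$ itself, so ``$A^{-1}A$ lies in a coset of a proper normal subgroup'' collapses to ``$A^{-1}A\subseteq N$ for a proper normal $N$,'' which $\langle A^{-1}A\rangle=G$ rules out. Thus $(2)$ supplies both clauses of scrambling.

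I expect the main obstacle to be bookkeeping the left-versus-right coset asymmetry in $(1)\Leftrightarrow(2)$: condition $(1)$ quantifies over both left and right cosets, whereas $A^{-1}A$ is intrinsically a ``left-difference'' set, so one must pass through the conjugation argument to see that the right-coset constraint is automatically implied by $\langle A^{-1}A\rangle=G$. Everything else is routine once the support computation and the observation $e\in A^{-1}A$ are in hand.
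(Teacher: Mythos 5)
Your proposal is correct and takes essentially the same route as the paper's proof: the same support computation $\mathrm{Supp}(\mu^{-1}*\mu)=A^{-1}A$, the same computation $a^{-1}b=h_1^{-1}h_2$ relating coset containment of $A$ to subgroup containment of $A^{-1}A$, the same conjugation trick to reduce right cosets to left cosets, and the same use of $e\in A^{-1}A$ to collapse ``coset of a proper normal subgroup'' to ``proper normal subgroup'' in the scrambling clause. The only difference is organizational (you package $(1)\Leftrightarrow(2)$ as a single two-way dictionary rather than two separate contrapositive implications), which does not change the substance.
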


\begin{proof}
We follow the directions $1\Leftrightarrow 2\Leftrightarrow 3$.

\begin{itemize}
    \item $1\Rightarrow 2$. Assume otherwise that $\langle A^{-1}A\rangle=H<G$. Then for any $a,b\in A$, we have $$a^{-1}b\in H\Rightarrow b\in aH\Rightarrow A\subseteq aH,$$
contradicting with $1$.
\item $2\Rightarrow 1$. Note that $ aH=(aHa^{-1})a$, and a left cost of a subgroup is at the same time a right coset of a conjugate subgroup. Assume that $A\subseteq aH$ for some $a\in G, H<G$. Then $\langle A^{-1}A\rangle \subseteq H<G$, contradicting with $2$. 
\item $2\Rightarrow 3$. It is easy to verify that $\mathrm{Supp}(\mu^{-1}*\mu)=A^{-1}A$. It suffices to prove that $A^{-1}A$ does not lie in any coset $aH$ of a proper normal subgroup $H \trianglelefteq G$. Assume otherwise, then either $1\in aH=H$ and $\langle A^{-1}A\rangle\subseteq H<G$ contradicting with $2$, or $1\notin aH$ contradicting $1\in A^{-1}A$.
\item $3\Rightarrow 2$. Follows by definition.
\end{itemize} 

\end{proof}

\begin{proof}[Proof of \Cref{thm:gapped}]

Since $\mu$ is strongly scrambling, then by ~\Cref{prp:weakly_scrambling} we know that $\mu^{-1}*\mu$ is scrambling. Without loss of generality, assume the former. By~\Cref{thm:inverse_symm_gapped},
\begin{equation}
    |||\Lambda(\mu^{-1} * \mu) - \Lambda(\mu_H)|||_2 <1.
\end{equation}
That is, the third largest magnitude of the spectrum of $\Lambda(\mu^{-1}* \mu)$ is less than 1. Since $\Lambda(\mu^{-1}*\mu) = \Lambda(\mu)^\dagger \Lambda(\mu)$, we conclude the third largest singular value of $\Lambda(\mu)$ is less than 1, which is the desired conclusion.
\end{proof}

\bibliographystyle{unsrt}
\bibliography{ref}
\end{document}